\date{}
\newtheorem{theorem}{Theorem}[section]
\newtheorem{remark}[theorem]{Remark}
\newtheorem{lemma}[theorem]{Lemma}
\newtheorem{proposition}[theorem]{Proposition}
\newtheorem{definition}[theorem]{Definition}
\newtheorem{example}[theorem]{Example}
\title{Balanced Truncation of Linear Time-Invariant Systems over Finite-frequency Ranges}
\author{Xin~Du$^{1,2}$,~\IEEEmembership{}
        Peter~Benner$^{1,2*}$,~\IEEEmembership{}
\thanks{$^1$ Max Planck Institute for
Dynamics of Complex Technical Systems, Sandtorstra$\ss$e 1, 39106
Magdeburg, Germany. \emph{$^{*}$Corresponding author:}\texttt{
benner@mpi-magdeburg.mpg.de}}
\thanks{$^2$School of Mechatronic Engineering and Automation,
      Shanghai University,  Shanghai, 200072, P.~R.~China. }
       \thanks{This work was supported by by NSFC under Grant (61304143) and the High-End Foreign Expert Program of the P.~R.~China (GDT20153100033).}}
\begin{document}
\centerfigcaptionstrue

\maketitle 

\begin{abstract}
This paper discusses model order reduction of LTI systems over limited frequency intervals within the framework of balanced truncation. Two new \emph{frequency-dependent balanced truncation} methods were developed, one is \emph{SF-type frequency-dependent balanced truncation} to copy with the cases that only a single dominating point of the operating frequency interval is pre-known, the other is \emph{interval-type frequency-dependent balanced truncation} to deal with the cases that both of the upper and lower bound of frequency interval are known \emph{a priori}. SF-type error bound and interval-type error bound are derived for the first time to estimate the desired approximation error over pre-specified frequency interval. We show that the new methods generally lead to good in-band approximation performance, at the same time, provide accurate error bounds under certain conditions. Examples are included for illustration.\end{abstract}



%

\section{Introduction and Problem Formulations}

\noindent We study model order reduction for linear time-invariant continuous-time systems
\begin{equation}
\label{originalsystem}
\small
G(\jmath\omega ):\left\{\begin{array}{l}
{\dot x(t) = Ax(t) + Bu(t)}  \\
   {y(t) = Cx(t) + Du(t)}  \\
   \end{array}\right.
 \Leftrightarrow G(\jmath\omega ): = \pmatset{1}{0.36pt}
  \pmatset{0}{0.2pt}
  \pmatset{2}{4pt}
  \pmatset{3}{4pt}
  \pmatset{4}{4pt}
  \pmatset{5}{4pt}
  \pmatset{6}{4pt}    \begin{pmat}[{|}]
       A   &   B \cr\-
       C   &   D \cr
      \end{pmat}   \Leftrightarrow G(\jmath\omega ):\mathop  = \limits^\omega  C{(\jmath\omega I - A)^{ - 1}}B + D\end{equation}
\noindent where $A\in \mathbb C^{n\times n},  B \in \mathbb
C^{n\times m},  C \in \mathbb C^{p\times n},  D \in \mathbb C^{p\times
m}$, $x(t) \in \mathbb C^n$ is the state vector, $u(t)\in \mathbb
C^m$ is the input signal, $y(t)\in \mathbb C^p$ is the output
signal. Modeling of complex physical processes often leads to large order $n$. The corresponding high storage requirements and expensive computations make it very difficult to simulate, optimize or even design such large scale systems \cite{Benner}-\cite{Autolas2}. In this case model order
reduction (MOR) plays an important role. It consists in approximating the system (1) by a reduced-order system: \begin{equation}
\label{reducedmodel}
\small
G_r(\jmath\omega ):\left\{\begin{array}{l}
{\dot x_r(t) = A_rx_r(t) + B_ru(t)}  \\
   {y(t) = C_rx_r(t) + D_ru(t)}  \\
   \end{array}\right.
 \Leftrightarrow G_r(\jmath\omega ): = \pmatset{1}{0.36pt}
  \pmatset{0}{0.2pt}
  \pmatset{2}{4pt}
  \pmatset{3}{4pt}
  \pmatset{4}{4pt}
  \pmatset{5}{4pt}
  \pmatset{6}{4pt}    \begin{pmat}[{|}]
       A_r   &   B_r \cr\-
       C_r   &   D_r \cr
      \end{pmat}   \Leftrightarrow G_r(\jmath\omega ):\mathop  = \limits^\omega  C_r{(\jmath\omega I - A_r)^{ - 1}}B_r + D_r\end{equation}
\noindent where $A_r\in \mathbb C^{r\times r}, B_r \in \mathbb
C^{n\times m}, C_r \in \mathbb C^{p\times n}, D_r \in \mathbb C^{p\times
m}$ with $r < n$.

Balanced truncation is a well grounded and the most commonly used model order reduction scheme \cite{BT2_Survey} \cite{BT3_PRBT}. The standard form is the so-called \emph{Lyapunov balanced truncation}, which was first introduced in the systems and control
literature by Moore \cite{BT1_Moore}. The prominent advantages of balanced truncation is that it preserves stability and provides an \emph{a priori} known error bound over the entire-frequency range. In detail, it gives a upper bound of the following entire-frequency (EF) type approximation performance index function
\begin{equation}
      \label{EF_index}
       \sigma_{max}({G(\jmath\omega ) - {G_r}(\jmath\omega )}) ,  {\kern 15pt} for {\kern 4pt} all {\kern 4pt}  \omega \in [-\infty, +\infty]
      \end{equation}
In many practical applications, the operating frequency of input signal belongs to a fully or partially known finite-frequency range such as a limited interval (i.e. $\omega \in [\varpi_1,\varpi_2]$). For
those cases, the reduced-order model is only needed to capture the input-output behavior of the original system for input signals with admissible frequency. Correspondingly, good in-band approximation performance is more expected,  while the out-band approximation performance might be neglected \cite{SPA_benner}-\cite{FGBT_Ghafoor2008}. In other words, the objective of \emph{finite-frequency} (FF) model order reduction is only to minimize the following finite-frequency type performance index function:
   \begin{equation}
   \label{FF_index}
   \sigma_{max}({G(\jmath\omega ) - {G_r}(\jmath\omega )}) ,   {\kern 15pt} for {\kern 4pt} all {\kern 4pt} \omega \in [\varpi_1, \varpi_2]
   \end{equation}
 Since the standard balanced truncation is intrinsically
 frequency-independent, hereby we will call it as \emph{frequency-independent balanced truncation} (FIBT) in the sequel, it cannot be used to further improve the in-band approximation performance with pre-known frequency information. To enhance the approximation performance over pre-specified frequency range, several balancing-related approaches have been developed. Some famous and popular ones include:

\indent (1) \emph{Singular perturbation approximation} (SPA). SPA is a companion balancing-related method of the standard FIBT and is first introduced by Liu and Anderson \cite{SPA_liu}. Although FIBT and SPA gives same entire-frequency type error bound, the characteristics of them are contrary to each other. The reduced systems generated by FIBT generally have a smaller error at high
frequencies, and tend to be larger at low frequencies. In contrast, SPA generally leads to good approximation performance at frequencies around $\omega=0$ by forcing the transfer function of full order model and reduced order model to be matched exactly at $\omega=0$ (i.e $G(\jmath 0)=G_r(\jmath 0)$). Therefore, SPA is particularly suited for solving model reduction problems in the cases that $\omega=0$ is pre-known as the dominating operating frequency point (\cite{SPA_Saragih} \cite{SPA_benner}). To further make the a flexible tradeoff between the local approximation performance over low-frequency ranges and the global approximation performance over entire frequency range, generalized SPA algorithm has been developed by introducing a user-defined adjustable scalar (see Obinata and Anderson \cite{SPA_Obinata}).

\indent (2) \emph{Frequency-weighted balanced truncation} (FWBT). In the fields of system analysis and control theory, frequency weighting functions is a conventional tool which has been widely applied for solving various analysis and synthesis problems with pre-known frequency information. For finite-frequency model order reduction problems,  utilizing the frequency weighting technique and combing it with the standard balanced truncation method also is very prevailing. During the last three decades, many frequency weighted balanced reduction approaches have been developed (see Enns \cite{FWBT_Enns1984}; Zhou \cite{FWBT_zhou1995}; Sreeram \cite{FWBT_sreeram_Agathoklis1989}; Ghafoor and Sreeram \cite{FWBT_Ghafoor_Sreeram2008}; Houlis and Sreeram \cite{FWBT_Houlis_Sreeram2009}; Wang et al \cite{FWBT_Wang_Sreeram_et_al1999}; Sreeram et al \cite{FWBT_Sreeram_et_al2005} and the references therein).  The common procedure of FWBT is build frequency-weighted model first by introducing input/out frequency weighted transfer functions and then apply the standard FIBT on the weighted model. Indeed, good frequency-specific approximation performance may be obtained if the selected weighting function is an \emph{appropriate} one. However, the design iterations to search for an \emph{appropriate} weighting transfer function can be tedious and time consuming. Besides, FWBT also suffers from the drawback of the increased order of the weighted
plant model.

\indent (3) \emph{Frequency-limited Grammians balanced truncation} (FGBT). It was first introduced by Gawronski and Juang in \cite{FGBT_Gawronski1990}. This methodology stems from the consideration of extending the definition of standard Gramians to the frequency-limited case and then applying the standard balanced truncation procedures to the frequency-limited Gramians (\cite{FGBT_Zadegan2005} \cite{FGBT_Shaker2014} \cite{FGBT_Ghafoor2008}). As has been pointed out in \cite{FGBT_Ghafoor2008},  FGBT may be invalid in some cases as the solutions of the ``frequency-limited Lyapunov equations" cannot be guaranteed to be positive semi-definite, and it provides no error bound. To overcome those drawbacks, several modified FGBT schemes providing error bound have been proposed (Gugercin and Antoulas \cite{BT2_Survey}; Gahfoor and Sreeram \cite{FGBT_Ghafoor2008})

A common feature of the those existing \emph{finite-frequency} balancing-related approaches is that they continue to use \emph{entire-frequency} type index (\ref{EF_index}) to evaluate the actually concerned \emph{finite-frequency} approximation performance (See Table I).
\begin{spacing}{1.0}
\begin{table}[!hbp]
\renewcommand{\arraystretch}{1.5}
\centering
\label{table1}
\begin{scriptsize}
 \caption{Characterizations of various balancing-related methods}\vspace{-0.1cm}
\begin{tabular}{|c|c|c|c|}
\Xhline{0.8pt}
  Assumption  & Method  & Actually concerned error   & Indices for the error bound\\ [2mm] \Xhline{0.8pt}
$\begin{array}{*{20}{c}}
  \rm{EF-MOR}  \\
  \omega \in [-\infty,+\infty] \\
\end{array}$ & FIBT  & $\sigma_{max}({G(\jmath \omega ) - {G_r}(\jmath \omega )}) ,   {\kern 2pt}\forall \omega \in [-\infty, +\infty]$
                      & $\sigma_{max}({G(\jmath \omega ) - {G_r}(\jmath \omega )}),   {\kern 2pt}\forall \omega \in [-\infty, +\infty]$  \\ [2mm]\Xhline{0.8pt}
\multirow{4}{*}{$\begin{array}{*{20}{c}}
  \\
  \\
  \rm{FF-MOR}  \\
  \omega \in [\varpi_1,+\varpi_2]  \\
\end{array}$} & SPA &  $\sigma_{max}({G(\jmath \omega ) - {G_r}(\jmath \omega )}) ,   {\kern 2pt}\forall \omega \in [\varpi_1, \varpi_2]$
                                 &  $\sigma_{max}({G(\jmath \omega ) - {G_r}(\jmath \omega )}),   {\kern 2pt}\forall \omega \in [-\infty, +\infty]$ \\ [2mm]
\cline{2-4}
& FWBT  & $\sigma_{max}({G(\jmath \omega ) - {G_r}(\jmath \omega )}),   {\kern 2pt}\forall \omega \in [\varpi_1, \varpi_2]$
            & $\sigma_{max}({G(\jmath \omega ) - {G_r}(\jmath \omega )}),   {\kern 2pt}\forall \omega \in[-\infty, +\infty]$\\[2mm]
\cline{2-4}
&  FGBT & $\sigma_{max}({G(\jmath \omega ) - {G_r}(\jmath \omega )}) ,   {\kern 2pt}\forall \omega \in [\varpi_1, \varpi_2]$
            & $\sigma_{max}({G(\jmath \omega ) - {G_r}(\jmath \omega )}),   {\kern 2pt}\forall \omega \in [-\infty, +\infty]$\\[2mm]
\cline{2-4}
&  $\begin{array}{*{20}{c}}
  \rm{FDBT} \\
  \rm{(To \; be\; developed)}\\
\end{array}$ & $\sigma_{max}({G(\jmath \omega ) - {G_r}(\jmath \omega )}) ,   {\kern 2pt}\forall \omega \in [\varpi_1, \varpi_2]$
            & $\sigma_{max}({G(\jmath \omega ) - {G_r}(\jmath \omega )}),   {\kern 2pt}\forall \omega \in [\varpi_1, \varpi_2]$ \\[2mm]
\Xhline{0.8pt}
\end{tabular}
\end{scriptsize}
\end{table}
\end{spacing}

As illustrated in Table I, there exists a incompatibleness between the intrinsic requirement and the achievement with respect to the existing finite frequency oriented balancing-related approaches. Since only entire-type error bounds are available, then whether or not the in-band approximation performance has been improved cannot be pre-known and guaranteed. In particular, FWBT and FGBT may gives rise to poor in-band approximation performance together with larger error bound in some cases. Moreover, there is little knowledge on the in-band approximation performance, even in the cases that the resulting in-band approximation performance is better than the standard FIBT method. This motivate us to revisit the finite-frequency model reduction problems.

In this paper, we are dedicated to deal with the finite-frequency model order reduction still within the framework of balanced truncation, however, a conceptual innovation that establishing \emph{finite-frequency} type error bound instead of \emph{entire-frequency }type error bound to estimate the in-band approximation error will be adopted in our development. The research scope and contribution of the present work is twofold. First, we focus on the cases that only a single dominating operating frequency point $\varpi$ is pre-known. By exploiting a special class of parameterized Mobious transformation, \emph{SF-type frequency-dependent balanced truncation} (FDBT) method was developed based on the Generalized KYP Lemma (Iwasaki and Hara \cite{Iwasaki2005AC}). It is shown that the proposed \emph{SF-type FDBT} provides a scalable SF-type error bound with respect to a user-defined parameter. By adjusting the parameter and picking it up with an appropriate value, it is probably to obtain satisfactory approximation performance. Second, we discuss the cases that both the upper bound and lower bound of operating frequency interval are pre-known. Following the same Generalized KYP Lemma based methodology, an \emph{interval-type frequency-dependent balanced truncation} method which provides interval-type error bound was developed. The \emph{interval-type FDBT} generally gives rise to good in-band approximation performance. In particular, we show that small in-band approximation error with small interval-type error bound could be simultaneously generated as long as the pre-specified interval is small enough.

The remainders of this paper is organized as follows: First, we introduce the Generalized KYP Lemma in Section 2. Then, we present the related results about \emph{SF-type frequency-dependent balanced truncation} method and \emph{interval-type frequency-dependent balanced truncation} method in Section 3 and Section 4, respectively. Next, we demonstrate the effectiveness and advantages of the proposed methods by several examples in Section 5.  Finally, we end with a conclusion in Section 6. \\

\noindent \textbf{Notations}: For a matrix $A$, $A^{T}$ and $A^{*}$ denote its transpose
and conjugate transpose, respectively. The symbol $*$
within a matrix represents the symmetric entries. $He(M)$ denotes $0.5(M+M^*)$. $\sigma_{max}(G)$
 denotes maximum singular value of the transfer matrix $G$. $Re(x)$ and $Im(x)$ denote the real part and imaginary part of the complex scalar $x$, respectively. $[M]^{\frac{1}{2}}$ denotes the square roots of matrix $M$ and $[M]^{\frac{1}{2}\star}$ denotes the \emph{ positive principle square root} of matrix $M$ (i.e. all the eigenvalues of $[M]^{\frac{1}{2}\star}$ has positive real part).  $I$ represents the identity matrix with appropriate dimension.\\

\section{Fundamental Tool}

The Kalman-Yakubovich-Popov (KYP) Lemma \cite{KYP} is a cornerstone in system and control theory. In fact, the EF-type error bound provided by the standard FIBT can be proofed and interpreted with the aid of KYP Lemma \cite{Zhoubook}. In \cite{Iwasaki2005AC}, Iwasaki and Hara successfully generalized the KYP Lemma from entire-frequency case to finite-frequency cases. The Generalized KYP Lemma plays a fundamental role in our developed and it is included here.\\

\begin{lemma} [Iwasaki and Hara \cite{Iwasaki2005AC}, Generalized KYP lemma] \label{lem-iwasaki} Consider a continuous-time system (\ref{originalsystem}), the following statements are equivalent:\\
(1) The frequency domain inequality
\begin{equation}
\label{lemma11} \sigma_{max}(G(\jmath \omega )) \le \gamma {\kern 4pt} holds {\kern 4pt} for {\kern 4pt} all{\kern 4pt} \omega {\kern 4pt} \in [\omega_1,\omega_2].
\end{equation}
 (2) There exist symmetric matrices $P$ and $Q$ of
appropriate dimensions, satisfying $Q> 0$ and 
\begin{small} { \begin{equation}
 \label{GKYP}
\pmatset{1}{0.36pt}
  \pmatset{0}{0.2pt}
  \pmatset{2}{10pt}
  \pmatset{3}{2pt}
  \pmatset{4}{2pt}
  \pmatset{5}{2pt}
  \pmatset{6}{2pt}
   \begin{pmat}[{|}]
  - He((j{\omega _1}I - A)Q{(j{\omega _2}I - A)^*})+ AP + P{A^*}  + B{B^*}  & (j{\omega _c}I - A)Q{C^*} + P{C^*} + B{D^*}
  \cr\-
   *                      & - CQ{C^*} + D{D^*} - {\gamma ^2}I \cr
  \end{pmat} \le 0.
\end{equation}}
\end{small}
\end{lemma}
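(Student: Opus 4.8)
\noindent The plan is to read the finite-interval frequency-domain inequality (FDI) (\ref{lemma11}) as a quadratic constraint holding along every resolvent trajectory of (\ref{originalsystem}) excited at a frequency in $[\omega_1,\omega_2]$, and to recognise the matrix inequality (\ref{GKYP}) as a lossless S-procedure certificate for that constraint. First I would record an algebraic description of the frequency segment: with $\omega_c=(\omega_1+\omega_2)/2$, set
\[
\Phi=\begin{pmatrix}0&1\\1&0\end{pmatrix},\qquad
\Psi=\begin{pmatrix}-1&\jmath\omega_c\\-\jmath\omega_c&-\omega_1\omega_2\end{pmatrix},
\]
so that $\begin{pmatrix}\lambda\\1\end{pmatrix}^{*}\Phi\begin{pmatrix}\lambda\\1\end{pmatrix}=2\,\mathrm{Re}(\lambda)$ and, along $\lambda=\jmath\omega$, $\begin{pmatrix}\lambda\\1\end{pmatrix}^{*}\Psi\begin{pmatrix}\lambda\\1\end{pmatrix}=-(\omega-\omega_1)(\omega-\omega_2)=:\beta(\omega)$; the point $\jmath\omega$ with $\omega\in[\omega_1,\omega_2]$ is thus exactly the locus where the first form vanishes and the second is nonnegative. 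Expanding (\ref{GKYP}), its left-hand side equals $\mathcal{N}^{*}(\Phi\otimes P+\Psi\otimes Q)\mathcal{N}+\Theta$, where $\mathcal{N}$ is the structured matrix carrying $(A,B,C,D)$ and $\Theta$ the part formed from $B,D,\gamma$ that involves neither $P$ nor $Q$; so the asserted equivalence is an instance of the lossless S-procedure for the constraint pair $(\Phi,\Psi)$.

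For the implication (\ref{GKYP})$\Rightarrow$(\ref{lemma11}) (the routine direction), I would fix $\omega\in[\omega_1,\omega_2]$ --- observing that (\ref{GKYP}) with $Q>0$ forces $A$ to have no eigenvalue of the form $\jmath\nu$, $\nu\in(\omega_1,\omega_2)$, since otherwise the $(1,1)$ block would contribute a strictly positive term along the corresponding left eigenvector (the endpoints being handled by continuity of $\sigma_{\max}(G(\jmath\omega))$) --- put $\eta=(-\jmath\omega I-A^{*})^{-1}C^{*}$, so that $\eta^{*}A=\jmath\omega\,\eta^{*}-C$ and $B^{*}\eta+D^{*}=G(\jmath\omega)^{*}$, and congruence-transform (\ref{GKYP}) by $\begin{pmatrix}\eta\\ I\end{pmatrix}$ on the right (with its conjugate transpose on the left). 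Using this resolvent identity, every term carrying $P$ cancels, all terms carrying $Q$ collapse to $\beta(\omega)\,\eta^{*}Q\eta$, and what remains assembles into $G(\jmath\omega)G(\jmath\omega)^{*}-\gamma^{2}I$. As a congruence of a negative-semidefinite matrix is negative semidefinite, $Q>0$, and $\beta(\omega)\ge 0$ on the segment, we get $G(\jmath\omega)G(\jmath\omega)^{*}-\gamma^{2}I\le-\beta(\omega)\eta^{*}Q\eta\le 0$, i.e.\ $\sigma_{\max}(G(\jmath\omega))\le\gamma$.

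The converse (\ref{lemma11})$\Rightarrow$(\ref{GKYP}) is the substantive part, and here I expect the main obstacle. I would proceed by duality/separation: feasibility of (\ref{GKYP}) in $(P,Q)$ with $Q>0$ is a semidefinite feasibility problem, and its infeasibility would yield --- by SDP duality / a Hahn--Banach separation --- a nonzero Hermitian $Z\succeq 0$ that is orthogonal to the range of $P\mapsto\mathcal{N}^{*}(\Phi\otimes P)\mathcal{N}$, that pairs nonnegatively with $\mathcal{N}^{*}(\Psi\otimes Q)\mathcal{N}$ for every $Q\succeq 0$, and with $\langle Z,\Theta\rangle\ge 0$. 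The crux --- the technically hard step --- is to show that any such $Z$ is a limit of conic combinations of rank-one moments built from points of the arc $\Lambda(\Phi,\Psi)=\{\jmath\omega:\omega\in[\omega_1,\omega_2]\}$, i.e.\ that the S-procedure here is \emph{lossless}. This convex-hull / rank-one-exactness property relies on $\Phi$ and $\Psi$ being $2\times 2$ (so that $\Lambda(\Phi,\Psi)$ is one-dimensional) and is established, after the standard controllability/observability reduction via a Kalman decomposition and a density argument, by a Carath\'eodory-type interpolation of moments along that arc. Once losslessness is in hand, a single rank-one component of the separating $Z$ exhibits a frequency $\omega^{\star}\in[\omega_1,\omega_2]$ and a vector $v$ with $\|G(\jmath\omega^{\star})^{*}v\|>\gamma\|v\|$, contradicting (\ref{lemma11}); hence (\ref{GKYP}) must be feasible. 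Finally, the exact calibration of the statement --- non-strict ``$\le 0$'' in (\ref{GKYP}) together with $Q>0$ --- is recovered from the strict version by a closure argument (perturb $\gamma$, extract feasible pairs, normalise, pass to the limit), using the regularity noted above to keep the limiting $Q$ positive definite. This is the route of \cite{Iwasaki2005AC}, whose technical core is precisely the losslessness lemma flagged here.
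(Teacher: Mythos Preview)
The paper does not prove Lemma~\ref{lem-iwasaki}: it is quoted verbatim from Iwasaki and Hara \cite{Iwasaki2005AC} as a fundamental tool, with no accompanying argument, so there is no in-paper proof to compare your proposal against. Your sketch follows precisely the route of the original source --- the $(\Phi,\Psi)$ parametrisation of the frequency arc, the resolvent congruence for the easy direction, and the lossless S-procedure via SDP duality and rank-one decomposition for the converse --- and as such is an accurate outline of how the cited result is established; for the purposes of this paper, however, no proof was expected.
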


\section{Frequency-dependent balanced truncation over uncertain frequency interval}

In this section, we focus on the model order reduction over an uncertain frequency interval (i.e. $ \omega \in [\varpi-\delta, \varpi+\delta ]$, where $\varpi$ denote the pre-known dominating frequency point, and $\delta$ denotes the unknown size of the frequency interval).  First, we construct a class of parameterized frequency-dependent extended systems, which plays an important role in the development of SF-type frequency-dependent balanced truncation.  Then, the related results and algorithm are presented. \\

\begin{definition}[SF-type Frequency-dependent Extend systems] \label{def-SF-FDES}  Given a system (\ref{originalsystem}) and a pre-specified frequency point $\varpi$,  the SF-type frequency-dependent extended systems can be constructed as:  \begin{equation}
\begin{small}
 \label{SF-FDES}
 G_{\epsilon\varpi}(\jmath \omega):\begin{array}{*{20}{c}}
  \pmatset{1}{0.1pt}
  \pmatset{0}{0.1pt}
  \pmatset{2}{4pt}
  \pmatset{3}{4pt}
  \pmatset{4}{4pt}
  \pmatset{5}{4pt}
  \pmatset{6}{1pt}
  {\begin{pmat}[{|}]
       A_\epsilon (\varpi) &  B_\epsilon (\varpi) \cr\-
       C_\epsilon (\varpi) &  D_\epsilon (\varpi) \cr
  \end{pmat}}=
  \pmatset{1}{0.1pt}
  \pmatset{0}{0.1pt}
  \pmatset{2}{4pt}
  \pmatset{3}{4pt}
  \pmatset{4}{4pt}
  \pmatset{5}{4pt}
  \pmatset{6}{1pt}
  {\begin{pmat}[{|}]
      \jmath \varpi I- \epsilon{(\epsilon I + \jmath \varpi I - A)^{ - 1}}(\jmath \varpi I - A) & \epsilon{(\epsilon I + \jmath \varpi I - A)^{ - 1}}B  \cr\-
      \epsilon C{(\epsilon I + \jmath \varpi I - A)^{ - 1}} & D+C{(\epsilon I + \jmath \varpi I - A)^{ - 1}}B  \cr
  \end{pmat}}
\end{array},\end{small}\end{equation}
\noindent \emph{where $\epsilon>0$ is a \textbf{user-specified scalar}. It
should be pointed out that $\epsilon$ should be a scalar satisfying
the condition: $\epsilon \neq -(\jmath \varpi -\lambda_i)$ to ensure
the invertibility of $(\epsilon I + \jmath \varpi I- A )$, where
$\lambda_i, i=1,...,n$ denote the eigenvalues of the matrix $A$.}\\
\end{definition}

\begin{proposition}\label{prop1} For a given system (\ref{originalsystem}), the corresponding SF-type frequency-dependent extended system (\ref{SF-FDES}) can be obtained by applying a particular Moebius transformation as follows:
\[\small {G_{\epsilon \varpi }}(\jmath \omega ) = G\left( {\frac{{a(\jmath \omega ) + b}}{{c(\jmath \omega ) + d}}} \right),\]
where $a = \epsilon - \jmath\varpi$, $b = -\varpi^2$, $c = -1$, $d = \epsilon + \jmath\varpi$.
\end{proposition}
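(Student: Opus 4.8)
The plan is to verify the proposition by direct substitution: we start with the claimed Moebius transformation $s = \frac{a(\jmath\omega) + b}{c(\jmath\omega) + d}$ with $a = \epsilon - \jmath\varpi$, $b = -\varpi^2$, $c = -1$, $d = \epsilon + \jmath\varpi$, compute the state-space realization of $G(s)$ under this change of variable, and check that it matches (\ref{SF-FDES}) entry by entry. Recall that for a rational change of variable $s \mapsto \frac{a s + b}{c s + d}$ applied to a transfer function $G(s) = C(sI-A)^{-1}B + D$, there is a standard formula for the realization of $G\bigl(\frac{as+b}{cs+d}\bigr)$ in terms of $A, B, C, D$ and $a,b,c,d$; the essential algebraic identity is that $\frac{as+b}{cs+d} I - A = (cs+d)^{-1}\bigl[(as+b)I - (cs+d)A\bigr] = (cs+d)^{-1}\bigl[(aI - cA)s + (bI - dA)\bigr]$, and one then rewrites $(aI-cA)s + (bI-dA)$ in the form $(aI - cA)\bigl(sI - \widetilde A\bigr)$ where $\widetilde A = -(aI-cA)^{-1}(bI-dA)$, provided $aI - cA$ is invertible.

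Concretely, I would carry out the following steps. First, substitute the given values: $aI - cA = (\epsilon - \jmath\varpi)I + A$ and $bI - dA = -\varpi^2 I - (\epsilon + \jmath\varpi)A$. Second, form $\widetilde A = -(aI-cA)^{-1}(bI - dA) = \bigl((\epsilon-\jmath\varpi)I + A\bigr)^{-1}\bigl(\varpi^2 I + (\epsilon+\jmath\varpi)A\bigr)$ and simplify; the target is $A_\epsilon(\varpi) = \jmath\varpi I - \epsilon(\epsilon I + \jmath\varpi I - A)^{-1}(\jmath\varpi I - A)$. A convenient way to see the match is to write $\varpi^2 I + (\epsilon+\jmath\varpi)A = (\epsilon - \jmath\varpi)(\jmath\varpi I - A) \cdot(\text{something}) $ — more carefully, I would verify $(\epsilon-\jmath\varpi)I + A$ times $A_\epsilon(\varpi)$ equals $\varpi^2 I + (\epsilon+\jmath\varpi)A$ by expanding, using the commutation of $A$ with $(\epsilon I + \jmath\varpi I - A)^{-1}$. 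Third, track the $B$, $C$, $D$ data: under the Moebius substitution the new input matrix picks up a factor coming from $(cs+d)^{-1}$ and from $(aI-cA)^{-1}$, the new output matrix picks up the analogous factor, and the new feedthrough term becomes $D$ plus a correction of the form $C(\cdot)^{-1}B$. I would check these reproduce $B_\epsilon(\varpi) = \epsilon(\epsilon I + \jmath\varpi I - A)^{-1}B$, $C_\epsilon(\varpi) = \epsilon C(\epsilon I + \jmath\varpi I - A)^{-1}$, and $D_\epsilon(\varpi) = D + C(\epsilon I + \jmath\varpi I - A)^{-1}B$, paying attention to where the scalar $\epsilon$ and the cross terms $\jmath\varpi$ land. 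Note $\epsilon I + \jmath\varpi I - A = (aI - cA)$ with these parameters, which is exactly the matrix appearing throughout (\ref{SF-FDES}), so the invertibility hypothesis in Definition~\ref{def-SF-FDES} is precisely the condition needed for the transformation to be well-defined.

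The main obstacle I anticipate is purely bookkeeping: the Moebius substitution formula for state-space realizations has several equivalent forms, and one must be careful that a single consistent choice of the factorization $(aI-cA)s + (bI-dA) = (aI-cA)(sI - \widetilde A)$ versus $(sI - \widetilde A)(aI - cA)$ is used, since $A$ and $(aI-cA)^{-1}$ commute but the $B$ and $C$ factors must be attached on the correct side. A secondary subtlety is confirming that the constant $\jmath\varpi I$ term in $A_\epsilon(\varpi)$ emerges correctly: writing $A_\epsilon(\varpi) = \jmath\varpi I - \epsilon(\epsilon I + \jmath\varpi I - A)^{-1}(\jmath\varpi I - A)$ and clearing the inverse, one gets $(\epsilon I + \jmath\varpi I - A)A_\epsilon(\varpi) = \jmath\varpi(\epsilon I + \jmath\varpi I - A) - \epsilon(\jmath\varpi I - A) = (\jmath\varpi)^2 I + \jmath\varpi\epsilon I - \jmath\varpi A - \epsilon\jmath\varpi I + \epsilon A = -\varpi^2 I - \jmath\varpi A + \epsilon A = \varpi^2 I\cdot(-1) + (\epsilon - \jmath\varpi)A$; comparing with the required $\varpi^2 I + (\epsilon + \jmath\varpi)A$ shows one must double-check the sign conventions in $b$ and $d$ and the direction of the transformation (whether it is $G(\frac{as+b}{cs+d})$ or its inverse), which is exactly the kind of detail where an off-by-a-sign error would hide. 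Once the parameter identification is pinned down, the remaining verification is a routine expansion, so I would present it compactly by exhibiting the factorization and then stating that equating the four blocks gives the result.
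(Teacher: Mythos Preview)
The paper gives no proof of this proposition, so there is no authors' argument to compare your proposal against; direct substitution is indeed the natural route. Two concrete problems, though.

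First, your intermediate identification ``$\epsilon I + \jmath\varpi I - A = aI - cA$'' is wrong: with $a=\epsilon-\jmath\varpi$ and $c=-1$ one gets $aI-cA=(\epsilon-\jmath\varpi)I+A$, not $M:=\epsilon I+\jmath\varpi I-A$. This is not a harmless slip; it is the key matrix in the whole computation.

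Second, and more seriously, the proposition as printed has the direction of the Moebius map reversed, and your plan will therefore fail to close. A one-line check: $G_{\epsilon\varpi}(\infty)=D_\epsilon(\varpi)=D+CM^{-1}B=G(\epsilon+\jmath\varpi)$, so the map $\phi$ in $G_{\epsilon\varpi}(s)=G(\phi(s))$ must satisfy $\phi(\infty)=\epsilon+\jmath\varpi$; but the paper's coefficients give $a/c=(\epsilon-\jmath\varpi)/(-1)=-\epsilon+\jmath\varpi$. Carrying the computation through (write $\tau=s-\jmath\varpi$ and simplify $C_\epsilon(sI-A_\epsilon)^{-1}B_\epsilon+D_\epsilon$) yields
\[
G_{\epsilon\varpi}(s)=G\!\left(\frac{(\epsilon+\jmath\varpi)s+\varpi^2}{s+\epsilon-\jmath\varpi}\right),
\]
whose inverse is exactly the map with the paper's $(a,b,c,d)$. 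In other words the stated $(a,b,c,d)$ describe $\phi^{-1}$, i.e.\ $G(s)=G_{\epsilon\varpi}\bigl(\tfrac{as+b}{cs+d}\bigr)$ rather than the printed identity. Your closing suspicion (``whether it is $G(\frac{as+b}{cs+d})$ or its inverse'') was on target: it is the inverse, so your write-up should either use the corrected parameters $a=\epsilon+\jmath\varpi$, $b=\varpi^2$, $c=1$, $d=\epsilon-\jmath\varpi$, or state explicitly that the proposition needs this amendment.
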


\begin{proposition}\label{prop2} The following statements are true:\\
 a). If the original system (\ref{originalsystem}) is Hurwitz stable and $\epsilon>0$, then the corresponding SF-type frequency-dependent extended system is stable.  \\
 b). Given the original system (\ref{originalsystem}) is unstable and denote the unstable eigenvalues of $A$ as $\lambda_i^{+}, i=1,...,n_u$, then the corresponding SF-type frequency-dependent extended system is stable if the value of $\epsilon$ satisfying $0<\epsilon<min({\epsilon_i^+}), i=1,...,n_u$, where
$\epsilon_i^+=(\varpi-Im(\lambda_i))^2/Re(\lambda_i)+Re(\lambda_i)$.
\end{proposition}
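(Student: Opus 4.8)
The plan is to reduce the stability question to the location of the spectrum of $A_\epsilon(\varpi)$ and then to pin down that spectrum explicitly. First I would put the state matrix into closed rational form. Writing $M := (\epsilon+\jmath\varpi)I - A$ and using $\jmath\varpi I - A = M - \epsilon I$, a short manipulation gives
\[
A_\epsilon(\varpi) = (\jmath\varpi - \epsilon)I + \epsilon^2 M^{-1} = \big((\epsilon - \jmath\varpi)A - \varpi^2 I\big)\big((\epsilon+\jmath\varpi)I - A\big)^{-1} = (aA + bI)(cA + dI)^{-1},
\]
with $a = \epsilon - \jmath\varpi$, $b = -\varpi^2$, $c = -1$, $d = \epsilon + \jmath\varpi$, exactly the coefficients of Proposition~\ref{prop1}. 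The invertibility requirement $\epsilon \neq -(\jmath\varpi - \lambda_i)$ built into Definition~\ref{def-SF-FDES} is precisely $c\lambda_i + d \neq 0$ for every eigenvalue $\lambda_i$ of $A$. Since $A$ and $(cA+dI)^{-1}$ commute, they are simultaneously triangularizable, so the eigenvalues of $A_\epsilon(\varpi)$ are exactly $f(\lambda_i) := \dfrac{a\lambda_i + b}{c\lambda_i + d}$, $i = 1,\dots,n$. Hence the extended system is Hurwitz stable iff $\mathrm{Re}\, f(\lambda_i) < 0$ for all $i$.

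The core of the argument is then one scalar computation. Writing $\lambda = x + \jmath y$ with $x = \mathrm{Re}(\lambda)$, $y = \mathrm{Im}(\lambda)$, expanding the real and imaginary parts of the numerator and denominator of $f(\lambda) = \dfrac{(\epsilon - \jmath\varpi)\lambda - \varpi^2}{(\epsilon+\jmath\varpi) - \lambda}$ and simplifying, I expect to obtain
\[
\mathrm{Re}\, f(\lambda) = \frac{\epsilon\big(\epsilon x - x^2 - (y - \varpi)^2\big)}{\,\big|(\epsilon + \jmath\varpi) - \lambda\big|^2\,}.
\]
Because $\epsilon > 0$ and the denominator is positive, the sign of $\mathrm{Re}\, f(\lambda)$ is the sign of the quadratic $q(\lambda) := \epsilon x - x^2 - (y - \varpi)^2$. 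Part~(a) is then immediate: if $A$ is Hurwitz, every eigenvalue has $x < 0$, so $\epsilon x < 0$ while $-x^2 - (y-\varpi)^2 < 0$ (as $x \neq 0$), whence $q(\lambda_i) < 0$ and $\mathrm{Re}\, f(\lambda_i) < 0$ for all $i$.

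For part~(b), the same sign analysis shows that every eigenvalue of $A$ with nonpositive real part (in particular every stable one) is still mapped strictly into the open left half-plane for any $\epsilon > 0$, so stability of $A_\epsilon(\varpi)$ is decided solely by the unstable eigenvalues $\lambda_i^+$, $i = 1,\dots,n_u$. For such an eigenvalue $\mathrm{Re}(\lambda_i^+) > 0$, so $q(\lambda_i^+) < 0$ is equivalent, after dividing by $\mathrm{Re}(\lambda_i^+)$, to
\[
\epsilon < \mathrm{Re}(\lambda_i^+) + \frac{(\varpi - \mathrm{Im}(\lambda_i^+))^2}{\mathrm{Re}(\lambda_i^+)} = \epsilon_i^+ ;
\]
imposing this for every $i$ together with $\epsilon > 0$ yields exactly the stated range $0 < \epsilon < \min_i \epsilon_i^+$, which is therefore sufficient for stability (and, as the computation shows, essentially necessary as well).

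I expect the only delicate step to be computational rather than conceptual, namely the expansion producing the formula for $\mathrm{Re}\, f(\lambda)$: the whole result hinges on the numerator factoring cleanly as $\epsilon\, q(\lambda)$ and on the positivity of $\epsilon$ — a negative $\epsilon$ would reverse the sign of $q$ and could destabilise even a Hurwitz $A$, which is why the hypothesis $\epsilon>0$ is essential in both parts. One caveat worth recording is the borderline case $\jmath\varpi \in \sigma(A)$: then $\jmath\varpi$ is a fixed point of $f$ and the corresponding eigenvalue of $A_\epsilon(\varpi)$ lies on the imaginary axis, so the statement is to be read under the tacit assumption that $A$ has no eigenvalue at $\jmath\varpi$ — equivalently, that the ``stable/unstable eigenvalues'' split used above is a genuine dichotomy.
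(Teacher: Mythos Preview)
Your proposal is correct and follows essentially the same route as the paper: both arguments compute the eigenvalues of $A_\epsilon(\varpi)$ as the images of the eigenvalues of $A$ under the scalar map $\lambda\mapsto \jmath\varpi-\epsilon(\jmath\varpi-\lambda)(\epsilon+\jmath\varpi-\lambda)^{-1}$, derive the same expression for the real part, and read off the sign condition. Your write-up is in fact somewhat more careful than the paper's, since you justify the eigenvalue correspondence via simultaneous triangularizability and explicitly flag the boundary case $\jmath\varpi\in\sigma(A)$, which the paper leaves implicit.
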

\begin{proof}  a). Let us denote $\lambda_i,i=1,2...,n$, and
$\lambda_{\epsilon i}(\varpi),i=1,2...,n$ as the eigenvalues of the matrices
$A$ and $A_{\epsilon}(\varpi)$, respectively. According to the mapping between
$A$ and $A_{\epsilon}(\varpi)$ given in (\ref{SF-FDES}), we know that
\[\lambda_{\epsilon i} (\varpi)=\jmath \varpi-\epsilon(\jmath \varpi -\lambda_i)(\epsilon+\jmath
\varpi-\lambda_i)^{-1}, i=1,...,n\]
\noindent Noticing that $Re(\lambda_i)<0$ if the system $G(\jmath \omega)$ is stable, then the following inequalities
\begin{equation}
\label{ProofPro2}
Re(\lambda_{\epsilon i}(\varpi)=-\frac{-\epsilon Re(\lambda_i)(\epsilon-Re(\lambda_i))+\epsilon(\varpi-Im(\lambda_i))^2}{(\epsilon-Re(\lambda_i))^2+(\varpi-Im(\lambda_i))^2}<0, i=1,...n\end{equation}
hold if $\epsilon>0$. Thus the proof is completed.\\[4mm]
\noindent b). Denote $\lambda_{\epsilon i}^+(\varpi), i=1,...,n_u$ as the eigenvalues of $A_{\epsilon}(\varpi)$ mapped from $\lambda_{i}^+$, i.e.
  \[\lambda_{\epsilon i}^+ (\varpi)=\jmath \varpi-\epsilon(\jmath \varpi -\lambda_i^+)(\epsilon+\jmath
\varpi-\lambda_i^+)^{-1}, i=1,...,n_u\]
then it can be concluded that $Re(\lambda_{\epsilon i}^+(\varpi))<0, i=1,...,n_u$ for all $\epsilon$ satisfying $0<\epsilon<min({\epsilon_i^+}), i=1,...,n_u$, according to the computational formula (\ref{ProofPro2}). Thus the proof is completed. \end{proof}

\begin{definition}[SF-type Frequency-dependent Lyapunov Equations]\label{def-SF-FDLE} Given a linear continuous-time system (\ref{originalsystem})
and one of its corresponding Hurwitz stable SF-type frequency-dependent
extended systems (\ref{SF-FDES}), then the following
two Lyapunov equation
\begin{equation}
\label{SF-FDLE}
\begin{array}{l}
 A_\epsilon (\varpi) W_{c\epsilon} (\varpi)  + W_{c\epsilon} (\varpi)A^*_\epsilon(\varpi)+  B_\epsilon(\varpi)B^*_\epsilon(\varpi)=0,\\
 A_\epsilon^* (\varpi)W_{o\epsilon} (\varpi) + W_{o\epsilon} (\varpi)A_\epsilon(\varpi)  +  C^*_\epsilon(\varpi)C_\epsilon(\varpi)=0. \\
 \end{array}\end{equation}
are defined as \emph{SF-type frequency-dependent controllability and
observability Lyapunov equations} of the continuous-time system
(\ref{originalsystem}). Furthermore, the solutions $W_{c\epsilon}(\varpi)$ and $
W_{o\epsilon}(\varpi)$ will be referred to as \emph{SF-type frequency-dependent controllability and observability Gramians} of the continuous-time system (\ref{originalsystem}).
\end{definition}

\begin{definition}[SF-type Frequency-dependent Balanced Realization]\label{SF-FDBR} Given a linear continuous-time system (\ref{originalsystem})
and one of its Hurwitz stable SF-type frequency-dependent extended systems
(\ref{SF-FDES}),  the corresponding SF-type frequency-dependent
controllability and observability Gramians are equal and diagonal,
i.e. the following Lyapunov equations
\begin{equation}
\label{SF-FDLE-balanced}
\begin{array}{l}
 A_\epsilon (\varpi) \Sigma_\epsilon(\varpi)+ \Sigma_{\epsilon} (\varpi) A^*_\epsilon(\varpi)+B_{\epsilon}(\varpi)B^*_{\epsilon}(\varpi)=0,\\
 A^*_\epsilon (\varpi)\Sigma_\epsilon(\varpi)+\Sigma_{\epsilon}(\varpi)  A_\epsilon(\varpi)  +C^*_{\epsilon}(\varpi)C_{\epsilon}(\varpi)=0. \\
 \end{array}\end{equation}
simultaneously hold, then this particular realization will be
referred to as a \emph{SF-type frequency-dependent balanced realization} \end{definition}

\begin{proposition} Suppose the given system (\ref{originalsystem}) is stable and let $W_c, W_o,\Sigma$ denote its standard controllability and observability and balanced Gramian matrices, then the following statements are true:\\
{\small \emph{a).} $W_c>W_{c\epsilon}(\varpi)$,
$W_o>W_{o\epsilon}(\varpi)$, $\Sigma>\Sigma_\epsilon(\varpi)$,\\
\emph{b).} $\mathop {\lim }\limits_{\varepsilon  \to 0} W_{c\epsilon}(\varpi)
 = 0, {\kern 2pt} \mathop {\lim }\limits_{\varepsilon  \to 0}
W_{o\epsilon}(\varpi) = 0, {\kern 2pt} \mathop {\lim }\limits_{\varepsilon
\to 0} {\Sigma_\epsilon(\varpi)
} = 0 $,\\
\emph{c).} $\mathop {\lim }\limits_{\varepsilon  \to \infty}
{W_{c\epsilon}(\varpi) } = W_c, {\kern 2pt} \mathop {\lim }\limits_{\varepsilon
\to \infty} {W_{o\epsilon}(\varpi) } = W_o, {\kern 2pt} \mathop {\lim
}\limits_{\varepsilon \to \infty} {\Sigma_\epsilon(\varpi) } = \Sigma$.}
\end{proposition}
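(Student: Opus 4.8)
The plan is to work directly with the Möbius transformation established in Proposition \ref{prop1}, which gives $G_{\epsilon\varpi}(\jmath\omega) = G\left(\frac{a(\jmath\omega)+b}{c(\jmath\omega)+d}\right)$ with $a=\epsilon-\jmath\varpi$, $b=-\varpi^2$, $c=-1$, $d=\epsilon+\jmath\varpi$. First I would observe that the eigenvalue map $\lambda_i \mapsto \lambda_{\epsilon i}(\varpi) = \jmath\varpi - \epsilon(\jmath\varpi - \lambda_i)(\epsilon + \jmath\varpi - \lambda_i)^{-1}$ derived in the proof of Proposition \ref{prop2} is the key structural fact: it shows that $A_\epsilon(\varpi)$ is obtained from $A$ by the same scalar Möbius map applied spectrally, and one can choose a common similarity transformation putting both into (block) triangular form so that the Gramian comparisons reduce to tracking this map. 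For part (c), as $\epsilon \to \infty$ the map sends $\lambda_i \to \lambda_i$ and, inspecting the realization in (\ref{SF-FDES}), $A_\epsilon(\varpi) \to A$, $B_\epsilon(\varpi) \to B$, $C_\epsilon(\varpi) \to C$ (since $\epsilon(\epsilon I + \jmath\varpi I - A)^{-1} \to I$); hence by continuous dependence of the solution of a Lyapunov equation on its coefficients (valid because the limiting $A$ is Hurwitz, so the Lyapunov operator stays invertible in a neighborhood), $W_{c\epsilon}(\varpi) \to W_c$, $W_{o\epsilon}(\varpi) \to W_o$, and therefore the Hankel singular values converge, $\Sigma_\epsilon(\varpi) \to \Sigma$. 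Part (b) is analogous: as $\epsilon \to 0$, $B_\epsilon(\varpi) \to 0$ and $C_\epsilon(\varpi) \to 0$ while $A_\epsilon(\varpi) \to \jmath\varpi I - 0 = \jmath\varpi I$ stays bounded; the unique solutions of the Lyapunov equations with vanishing right-hand sides tend to $0$, and so does $\Sigma_\epsilon(\varpi)$.

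For part (a), the strict inequalities, I would use the integral representation of the Gramians together with the Möbius/substitution structure. The standard controllability Gramian is $W_c = \frac{1}{2\pi}\int_{-\infty}^{\infty} (\jmath\omega I - A)^{-1}BB^*(\jmath\omega I - A)^{-*}\,d\omega$, and similarly $W_{c\epsilon}(\varpi) = \frac{1}{2\pi}\int_{-\infty}^{\infty} (\jmath\omega I - A_\epsilon(\varpi))^{-1}B_\epsilon(\varpi)B_\epsilon(\varpi)^*(\jmath\omega I - A_\epsilon(\varpi))^{-*}\,d\omega$. Performing the change of variables induced by the Möbius map $s = \frac{a(\jmath\omega)+b}{c(\jmath\omega)+d}$, the extended-system integrand pulls back to the original integrand times a positive weighting factor arising from the Jacobian $\frac{d\omega}{d\nu}$ of the map restricted to the imaginary axis (the Möbius map carries the imaginary axis to itself, or to a circle, and one checks the orientation/positivity of this factor), and crucially this weight is everywhere $\le 1$ (strictly $<1$ except possibly at isolated points) for $\epsilon > 0$ finite. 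Concretely I expect the relation $W_c - W_{c\epsilon}(\varpi) = \frac{1}{2\pi}\int (\jmath\omega I - A)^{-1}BB^*(\jmath\omega I - A)^{-*}\,\bigl(1 - w_\epsilon(\omega)\bigr)\,d\omega$ with $0 \le w_\epsilon(\omega) < 1$, which is a positive-definite integral provided $(A,B)$ is controllable (reachability of the stable system), giving $W_c > W_{c\epsilon}(\varpi)$; the observability case is dual. The inequality $\Sigma > \Sigma_\epsilon(\varpi)$ then follows because the Hankel singular values are the square roots of the eigenvalues of the product of the two Gramians, and $0 < W_{c\epsilon}(\varpi) < W_c$, $0 < W_{o\epsilon}(\varpi) < W_o$ forces $\rho(W_{c\epsilon}(\varpi)W_{o\epsilon}(\varpi)) < \rho(W_c W_o)$ eigenvalue-wise (using the monotonicity of eigenvalues of $X^{1/2}YX^{1/2}$ type products under the Loewner order, comparing balanced coordinates).

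The main obstacle I anticipate is making the change-of-variables argument in part (a) fully rigorous: one must verify that the parameterized Möbius map indeed maps the imaginary axis onto itself (or a curve over which the relevant contour integral equals the Gramian), compute the pullback weight $w_\epsilon(\omega)$ explicitly, and confirm it lies in $[0,1)$ for all $\omega$ when $0 < \epsilon < \infty$ — in particular checking the behavior at $\omega \to \pm\infty$ and at $\omega = \varpi$ where the map is "centered." An alternative, perhaps cleaner, route for part (a) is purely Lyapunov-algebraic: write $A_\epsilon(\varpi) = \jmath\varpi I - \epsilon M^{-1}(\jmath\varpi I - A)$ with $M = \epsilon I + \jmath\varpi I - A$, substitute into the Lyapunov equation for $W_{c\epsilon}(\varpi)$, and manipulate it into the form $A\,\widetilde{W} + \widetilde{W}A^* + BB^* = $ (a positive semidefinite correction term) for a suitable congruence $\widetilde W$ of $W_{c\epsilon}(\varpi)$, so that comparison with the equation defining $W_c$ yields the strict ordering by the standard monotonicity of Lyapunov solutions; deciding between these two routes is the principal design choice, and I would attempt the algebraic one first since it avoids contour-integration subtleties.
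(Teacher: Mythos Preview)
Your plan for parts (a) and (c) is essentially the paper's argument. For (c) the paper does exactly what you propose: it checks $A_\epsilon(\varpi)\to A$, $B_\epsilon(\varpi)\to B$, $C_\epsilon(\varpi)\to C$ and concludes by continuity of the Lyapunov solution. For (a) the paper follows your ``alternative, perhaps cleaner'' algebraic route almost verbatim: it pre- and post-multiplies the SF-type Lyapunov equation by $\epsilon^{-1}(\epsilon I+\jmath\varpi I-A)$ and its adjoint (your congruence), obtaining
\[
A\,W_{c\epsilon}(\varpi)+W_{c\epsilon}(\varpi)\,A^* + 2\epsilon^{-1}(\jmath\varpi I-A)W_{c\epsilon}(\varpi)(\jmath\varpi I-A)^* + BB^* = 0,
\]
then subtracts the standard equation $AW_c+W_cA^*+BB^*=0$ to get a Lyapunov equation for $W_c-W_{c\epsilon}(\varpi)$ with a strictly positive forcing term, whence $W_c>W_{c\epsilon}(\varpi)$. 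So your preferred route is the right one; the integral/Jacobian route is not needed, and in fact the M\"obius map here does \emph{not} carry the imaginary axis to itself (a quick computation gives $\mathrm{Re}\!\left(\tfrac{a(\jmath\omega)+b}{c(\jmath\omega)+d}\right)=\tfrac{-\epsilon(\omega-\varpi)^2}{\epsilon^2+(\omega-\varpi)^2}$), so your caveat about that approach was well placed.

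There is, however, a genuine gap in your argument for (b). You argue that $A_\epsilon(\varpi)\to\jmath\varpi I$ while $B_\epsilon(\varpi),C_\epsilon(\varpi)\to 0$, and then invoke ``unique solutions of the Lyapunov equations with vanishing right-hand sides tend to $0$.'' But the limiting matrix $\jmath\varpi I$ is purely imaginary, so the Lyapunov operator $X\mapsto (\jmath\varpi I)X+X(\jmath\varpi I)^*=0$ is \emph{identically zero}: every Hermitian $X$ solves it, and the solution map is not continuous at that limit. You therefore cannot conclude $W_{c\epsilon}(\varpi)\to 0$ from this alone. The paper avoids this by first rewriting the Lyapunov equation in the congruent form
\[
He\bigl((\jmath\varpi I-A)(\epsilon I+\jmath\varpi I-A)^{-1}W_{c\epsilon}(\varpi)\bigr)=\epsilon\,(\epsilon I+\jmath\varpi I-A)^{-1}BB^*(\epsilon I+\jmath\varpi I-A)^{-*},
\]
in which the left factor $(\jmath\varpi I-A)(\epsilon I+\jmath\varpi I-A)^{-1}\to I$ as $\epsilon\to 0$, so the left-hand side tends to $W_{c\epsilon}(\varpi)$ itself while the right-hand side is $O(\epsilon)$. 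Combined with the boundedness $0\le W_{c\epsilon}(\varpi)<W_c$ from part (a), this forces the limit to be $0$. You should replace your continuity argument by this congruence (which is, conveniently, the same congruence you already proposed for part (a)).
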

\begin{proof} a). It is well known that the standard controllability and
observability Gramian matrices $W_c, W_o$ of system (1) satisfy the
following standard frequency-independent Lyapunov equations:
\begin{equation}
\small
\label{FILE}\begin{array}{l}
AW_{c}+W_{c}A^*+BB^*=0\\
A^*W_{o}+W_{o}A+C^*C=0. \\
 \end{array}\end{equation}
Post-and-pre multiply the SF-type frequency-dependent Lyapunov equations
(\ref{SF-FDLE}) by $\epsilon^{-1}(\epsilon I+ \jmath
\varpi I-A)$, then we have
\begin{equation}
\small
\label{ww-lyapunov_equation}\begin{array}{l}
 AW_{\varpi c} +  {W_{\varpi c}}{A^*} + 2{\epsilon}^{-1} (\jmath \varpi I - A){W_{\varpi c}}{(\jmath \varpi I - A)^*} +BB^*=0 \\
 {A^*} {W_{\varpi o}} + {W_{\varpi o}}A + 2{\epsilon}^{-1}{(\jmath \varpi I - A)^*}{W_{\varpi o}}(\jmath \varpi I - A) +BB^*= 0. \\
 \end{array}\end{equation}
Furthermore, the following equations can be derived by subtracting the
equations (\ref{FILE}) from
(\ref{ww-lyapunov_equation})
\begin{equation}
\small
\label{www-lyapunov_equation}\begin{array}{l}
 A({W_c} - W_{c\epsilon}(\varpi)) + ({W_c} - W_{c\epsilon}(\varpi)){A^*} + 2{\epsilon}^{-1} (\jmath \varpi I - A)W_{c\epsilon}(\varpi){(\jmath \varpi I - A)^*} = 0 \\
 {A^*}({W_o} - W_{o\epsilon}(\varpi)) + ({W_o} - W_{o\epsilon}(\varpi))A + 2{\epsilon}^{-1}{(\jmath \varpi I - A)^*}W_{o\epsilon}(\varpi)(\jmath \varpi I - A) = 0 \\
 \end{array}\end{equation}
It is easily to conclude that $({W_c} - W_{o\epsilon}(\varpi))>0$ and
$({W_o} - W_{o\epsilon}(\varpi))>0$ since
\begin{equation}
\small
\begin{array}{l}
 2{\epsilon}^{-1} (\jmath \varpi I - A)W_{c\epsilon}(\varpi){(\jmath \varpi I - A)^*}>0 \\
2{\epsilon}^{-1}{(\jmath \varpi I - A)^*}W_{o\epsilon}(\varpi)(\jmath \varpi I - A)>0. \\
 \end{array}\end{equation}
 \noindent Thus the proof is completed.\\
 \noindent b). The SF-type frequency-dependent Lyapunov equations (\ref{SF-FDLE}) can be rewritten as:
\begin{equation}
\small
\begin{array}{l}
He((\jmath \varpi I - A){(\epsilon I + \jmath \varpi I - A)^{ - 1}}{ W_{c\epsilon}(\varpi)}) = \epsilon{(\epsilon I + \jmath \varpi I - A)^{ - 1}}B{B^*}{(\epsilon I + \jmath \varpi I - A)^{ - *}} \\
He((\jmath \varpi I - A)^*(\epsilon I + \jmath \varpi I - A)^{ - *} W_{o\epsilon}(\varpi))   = \epsilon(\epsilon I + \jmath \varpi I - A)^{ - *}{C^*}C(\epsilon I + \jmath \varpi I - A)^{ - 1}. \\
 \end{array}\end{equation}
thus one can conclude that:
\[
\small
\begin{array}{l}
 \mathop {\lim }\limits_{\epsilon  \to 0}  W_{c\epsilon(\varpi) } = \frac{1}{2}  He( \mathop {\lim }\limits_{\epsilon  \to 0} (\jmath \varpi I - A){(\epsilon I + \jmath \varpi I - A)^{ - 1}}{ \mathop {\lim }\limits_{\epsilon  \to 0} W_{c\epsilon}(\varpi)}) \\
{\kern 52pt} =   \frac{1}{2} \mathop {\lim }\limits_{\epsilon  \to 0} { \epsilon}\mathop {\lim }\limits_{\epsilon  \to 0} (\epsilon I +\jmath \varpi I - A{)^{ - 1}}B{B^*}(\epsilon I+\jmath \varpi I - A)^{ - *}=0, \\
 \mathop {\lim }\limits_{\epsilon  \to 0}  W_{o\epsilon(\varpi) } =  \frac{1}{2}  He( \mathop {\lim }\limits_{\epsilon  \to 0} (\jmath \varpi I - A)^*{(\epsilon I + \jmath \varpi I - A)^{ - *}}{ \mathop {\lim }\limits_{\epsilon  \to 0} W_{o\epsilon}(\varpi)}) \\
 {\kern 52pt} = \frac{1}{2}\mathop {\lim }\limits_{\epsilon  \to 0}  {\epsilon}\mathop {\lim }\limits_{\epsilon  \to 0}  (\epsilon I+\jmath \varpi I - A{)^{ - *}}{C^*}{C}(\epsilon I+\jmath \varpi I - A)^{ - {1}}=0.\\
 \end{array}\]
 \noindent Thus the proof is completed.\\
\noindent 3). It can be easily observed that the
$\varpi$-dependent matrices ${A_\varpi }, {B_\varpi }, {C_\varpi }$
will recover $A,B,C$ as $\epsilon \to \infty$, i.e.
\begin{equation}
\small
\begin{array}{l} \mathop {\lim }\limits_{\varepsilon  \to
\infty } {A_\varpi } =\mathop {\lim }\limits_{\varepsilon  \to
\infty } {(\jmath \varpi I- \epsilon{(\epsilon I + \jmath \varpi I -
A)^{ - 1}}(\jmath \varpi I - A))}=
A,\\
\mathop {\lim }\limits_{\varepsilon  \to \infty } {B_\varpi }
=\mathop {\lim }\limits_{\varepsilon  \to \infty } {
\epsilon{(\epsilon I + \jmath \varpi I - A)^{ - 1}}B  }=
B,\\
\mathop {\lim }\limits_{\varepsilon  \to \infty } {C_\varpi }
=\mathop {\lim }\limits_{\varepsilon  \to \infty } { \epsilon
C{(\epsilon I + \jmath \varpi I - A)^{ - 1}} } =
C.\\
 \end{array}\end{equation}
\noindent Then it is trivial to conclude that
 \[\begin{array}{l}
\mathop {\lim }\limits_{\epsilon \to \infty} {W_\varpi }_c = W_c,
{\kern 4pt} \mathop {\lim }\limits_{\epsilon \to \infty} {W_\varpi
}_o = W_o, {\kern 4pt} \mathop {\lim }\limits_{\epsilon \to \infty}
{\Sigma_\varpi} =\Sigma. \\
 \end{array}\]
 \end{proof}

\begin{theorem}[SF-type Frequency-dependent Balanced Truncation]\label{the-SF-FDBT} Given a linear continuous-time system (\ref{originalsystem})
and the pre-known dominating operating frequency point $\omega=\varpi$, then for any one
of its Hurwitz stable SF-type frequency-dependent extended systems
(\ref{SF-FDES}) given in SF-type frequency-dependent balanced
realization with respect to the SF-type frequency-dependent Gramian
$\Sigma_{\epsilon}(\varpi)=diag{(\Sigma_{\epsilon 1}(\varpi),\Sigma_{\epsilon 2}(\varpi))}$
\[\begin{array}{l}
\Sigma_{\epsilon 1}(\varpi)=diag{(\sigma_{\epsilon 1}(\varpi),    \sigma_{\epsilon 2}(\varpi),...,\sigma_{\epsilon r}(\varpi))}, \\
\Sigma_{\epsilon 2}(\varpi)=diag{(\sigma_{\epsilon (r+1)}(\varpi),\sigma_{\epsilon (r+2)} (\varpi),...,\sigma_{\epsilon n}(\varpi))}, \\
\end{array}\]
 and $\sigma_{\epsilon 1}(\varpi)\geq
...\geq\sigma_{\epsilon r}(\varpi)\geq...\geq\sigma_{\epsilon n}(\varpi),$
 the desired $r^{th}$-order model  $
  \pmatset{1}{0.36pt}
  \pmatset{0}{0.2pt}
  \pmatset{2}{2pt}
  \pmatset{3}{2pt}
  \pmatset{4}{2pt}
  \pmatset{5}{2pt}
  \pmatset{6}{2pt}
G_{r}(\jmath \omega):= {   \begin{pmat}[{|}]
       A_{r}   &  B_{r} \cr\-
       C_{r}   &  D_{r} \cr
  \end{pmat}}
$  is given by: 
\begin{equation}
\label{reducedmodel-SF}
\begin{array}{l}
  A_{r}=   \jmath \varpi I-\epsilon Z_r(\jmath \varpi I-A_{\epsilon}(\varpi)){Z_r^T} (\epsilon I -Z_r(\jmath \varpi I-A_{\epsilon}(\varpi)){Z_r^T})^{-1},\\
  B_{r} =  \epsilon^{-1}(\epsilon  I + \jmath \varpi I - A_{r})Z_r B_{\epsilon} (\varpi), \\
  C_{r} =  \epsilon^{-1} C_{\epsilon} (\varpi)Z_r^T(\epsilon I + \jmath \varpi I - A_{r}),\\
  D_{r} =  D_{\epsilon}(\varpi)-C_{r}(\epsilon  I + \jmath \varpi I - A_{r})^{-1}B_{r}, \\
 \end{array}\end{equation}

\noindent  where $
 {Z_r} =[I^{r \times r} {\kern 6pt}  0^{r \times (n - r)}]
$. Furthermore, the truncated model $G_r(\jmath \omega)$ possesses
the following properties: \\
\noindent 1). The approximation error between the original system model
(\ref{originalsystem}) and the truncated $r^{th}$ reduced model (\ref{reducedmodel-SF})
at the given frequency point $\omega=\varpi$ satisfies the following SF-type error bound: 
\begin{equation}
  \label{SF-EB-SF}\sigma_{max}(G(\jmath \omega ) - {G_{r}}(\jmath \omega)) \le 2\sum\limits_{i = r + 1}^{n} {{\sigma _{i\epsilon}(\varpi)}}, {\kern 8pt}  for {\kern 6pt} \omega=\varpi.
\end{equation}
 2). The approximation error between the original system model
(\ref{originalsystem}) and the truncated $r^{th}$ reduced model (\ref{reducedmodel-SF})
over entire frequency range satisfies the following EF-type error bound: 
\begin{equation}
\label{EF-EB-SF}
\begin{array}{l}
\sigma_{max}(G(\jmath \omega ) - {G_r}(\jmath \omega))
  \le 2 \sum\limits_{i = r + 1}^{n} {{\sigma _{i\varpi}}} {\kern 1pt}\\
 {\kern 114pt}+{\left\| G(\jmath \omega)-G_{\epsilon \varpi}(\jmath \omega) \right\|_\infty }\\
 {\kern 114pt}+{\left\| G_r(\jmath \omega)-G_{r \epsilon \varpi}(\jmath \omega) \right\|_\infty }, \; for \; all \; \omega \in [-\infty,+\infty] \\
 \end{array}\end{equation}
\noindent where
\begin{equation}
\begin{small}
 \label{Reduced_SF-FDES}
 G_{r \epsilon\varpi}(\jmath \omega):\begin{array}{*{20}{c}}
  \pmatset{1}{0.1pt}
  \pmatset{0}{0.1pt}
  \pmatset{2}{4pt}
  \pmatset{3}{4pt}
  \pmatset{4}{4pt}
  \pmatset{5}{4pt}
  \pmatset{6}{1pt}
  {\begin{pmat}[{|}]
       A_{r\epsilon} (\varpi) &  B_{r\epsilon} (\varpi) \cr\-
       C_{r\epsilon} (\varpi) &  D_{r\epsilon} (\varpi) \cr
  \end{pmat}}=
  \pmatset{1}{0.1pt}
  \pmatset{0}{0.1pt}
  \pmatset{2}{4pt}
  \pmatset{3}{4pt}
  \pmatset{4}{4pt}
  \pmatset{5}{4pt}
  \pmatset{6}{1pt}
  {\begin{pmat}[{|}]
      \jmath \varpi I- \epsilon{(\epsilon I + \jmath \varpi I - A_r)^{ - 1}}(\jmath \varpi I - A_r) & \epsilon{(\epsilon I + \jmath \varpi I - A_r)^{ - 1}}B_r  \cr\-
      \epsilon C_r{(\epsilon I + \jmath \varpi I - A_r)^{ - 1}} & D_r+C_r{(\epsilon I + \jmath \varpi I - A_r)^{ - 1}}B_r  \cr
  \end{pmat}}
\end{array},\end{small}\end{equation}
\end{theorem}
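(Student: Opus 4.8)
The plan is to reduce both claims to the classical balanced--truncation error bound applied to the \emph{extended} system $G_{\epsilon\varpi}$, exploiting that the underlying Moebius transformation fixes the target frequency point. The first task is to recognize what $G_{r\epsilon\varpi}$ really is. Partition the SF-type frequency-dependent balanced realization of $G_{\epsilon\varpi}$ conformally with $\Sigma_\epsilon(\varpi)=\mathrm{diag}(\Sigma_{\epsilon1}(\varpi),\Sigma_{\epsilon2}(\varpi))$ and keep the leading block, obtaining the order-$r$ truncated realization $(Z_rA_\epsilon(\varpi)Z_r^{T},\,Z_rB_\epsilon(\varpi),\,C_\epsilon(\varpi)Z_r^{T},\,D_\epsilon(\varpi))$. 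A direct computation (using $A_\epsilon(\varpi)=(\jmath\varpi-\epsilon)I+\epsilon^{2}(\epsilon I+\jmath\varpi I-A)^{-1}$ and the analogous identity for $A_r$) shows that $(A_r,B_r,C_r,D_r)$ in (\ref{reducedmodel-SF}) is precisely the \emph{inverse} of the Moebius realization map of Definition~\ref{def-SF-FDES} applied to this truncated block, while (\ref{Reduced_SF-FDES}) re-applies the \emph{forward} map of Proposition~\ref{prop1}. Since the two Moebius maps are mutually inverse, $G_{r\epsilon\varpi}$ has the same transfer function as the truncated realization; i.e.\ $G_{r\epsilon\varpi}$ is the standard order-$r$ balanced-truncation reduced model of $G_{\epsilon\varpi}$, which is Hurwitz by Proposition~\ref{prop2}(a), with Hankel singular values $\sigma_{1\epsilon}(\varpi)\ge\cdots\ge\sigma_{n\epsilon}(\varpi)$.

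Next I would record the elementary but decisive observation that the Moebius transformation $\phi(s)=\dfrac{(\epsilon-\jmath\varpi)s-\varpi^{2}}{-s+\epsilon+\jmath\varpi}$ of Proposition~\ref{prop1} fixes the point $s=\jmath\varpi$ (substitution gives numerator $\jmath\epsilon\varpi$ and denominator $\epsilon$). Since $G_{\epsilon\varpi}=G\circ\phi$ and, by the same proposition applied to $G_r$, $G_{r\epsilon\varpi}=G_r\circ\phi$, evaluation at $\omega=\varpi$ yields $G(\jmath\varpi)=G_{\epsilon\varpi}(\jmath\varpi)$ and $G_r(\jmath\varpi)=G_{r\epsilon\varpi}(\jmath\varpi)$. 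For Part~1), the classical balanced-truncation error bound applied to the stable system $G_{\epsilon\varpi}$ gives $\sigma_{max}(G_{\epsilon\varpi}(\jmath\omega)-G_{r\epsilon\varpi}(\jmath\omega))\le 2\sum_{i=r+1}^{n}\sigma_{i\epsilon}(\varpi)$ for every $\omega$; setting $\omega=\varpi$ and inserting the two identities proves (\ref{SF-EB-SF}).

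For Part~2), decompose, for every $\omega$,
\[
G(\jmath\omega)-G_r(\jmath\omega)=\big(G(\jmath\omega)-G_{\epsilon\varpi}(\jmath\omega)\big)+\big(G_{\epsilon\varpi}(\jmath\omega)-G_{r\epsilon\varpi}(\jmath\omega)\big)+\big(G_{r\epsilon\varpi}(\jmath\omega)-G_r(\jmath\omega)\big).
\]
By subadditivity of $\sigma_{max}$, the middle term is bounded by $2\sum_{i=r+1}^{n}\sigma_{i\epsilon}(\varpi)$ uniformly in $\omega$ (the same classical bound), and the two outer terms by $\|G-G_{\epsilon\varpi}\|_\infty$ and $\|G_r-G_{r\epsilon\varpi}\|_\infty$ respectively; this is exactly (\ref{EF-EB-SF}).

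The substantive work, and the main obstacle, lies in the first task: verifying carefully that (\ref{reducedmodel-SF})--(\ref{Reduced_SF-FDES}) indeed encode ``truncate the balanced extended realization, then invert the Moebius map''. This is routine realization algebra, but one must track the invertibility hypotheses --- notably that $\epsilon I-Z_r(\jmath\varpi I-A_\epsilon(\varpi))Z_r^{T}$ is nonsingular, the reduced-order counterpart of the condition on $\epsilon$ in Definition~\ref{def-SF-FDES}, which I would carry as a standing assumption (it can fail only for isolated values of $\epsilon$). Two further points deserve care: the extended realization has complex data, so the error bound invoked is the complex-state-space form of the Glover/Enns balanced-truncation bound, whose Lyapunov-based proof (removing one Hankel singular value at a time) carries over verbatim provided $\Sigma_\epsilon(\varpi)>0$, i.e.\ the extended realization is minimal --- the natural standing assumption, as in standard balanced truncation; and the two $\|\cdot\|_\infty$ terms in (\ref{EF-EB-SF}) are finite precisely when $G_r$ has no poles on the imaginary axis, which holds generically (and otherwise (\ref{EF-EB-SF}) is trivially true).
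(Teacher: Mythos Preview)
Your proof is correct, and for Part~1) it is genuinely different from---and noticeably cleaner than---the paper's own argument. The paper proves the SF-type bound (\ref{SF-EB-SF}) by working directly with the Generalized KYP Lemma: it writes the error system $\mathcal E_n=G-G_{n-1}$, constructs a dilated system $\mathscr E_n$ with explicit auxiliary matrices $\mathcal B_{dn},\mathcal C_{dn},\mathcal D_{dn}^{ij}$, builds concrete Lyapunov variables $\mathscr Q_{en},\mathscr P_{en}$ out of $\Sigma_\epsilon(\varpi)$, and then verifies by brute-force block computation that the single-frequency KYP matrix inequality (\ref{GKYP}) with $\omega_1=\omega_2=\varpi$ collapses to zero. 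Your route avoids all of this by isolating the single structural fact that the Moebius map of Proposition~\ref{prop1} fixes $\jmath\varpi$, so that $G(\jmath\varpi)=G_{\epsilon\varpi}(\jmath\varpi)$ and $G_r(\jmath\varpi)=G_{r\epsilon\varpi}(\jmath\varpi)$; the bound then follows immediately from the classical (complex) balanced-truncation $H_\infty$ bound applied to the stable extended system $G_{\epsilon\varpi}$. What the paper's approach buys is a template (explicit $\mathscr Q,\mathscr P$ feeding the GKYP inequality) that it reuses for the interval-type bound in Theorem~\ref{the-Int-FDBT}, where no single-point Moebius trick is available; what your approach buys is a one-line conceptual proof of Part~1) that makes the role of the fixed point transparent. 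For Part~2) the two proofs coincide: the paper also records that $G_{r\epsilon\varpi}$ is the standard FIBT reduction of $G_{\epsilon\varpi}$ and then applies the same triangle-inequality decomposition you wrote down.
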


\begin{proof} 1). The detailed proof for $r=n-1$ case will be
provided in the sequel, and the $r=n-2,...1$ cases can be
easily completed step by step. \\[-8mm]

\noindent The error system model between the original high-order system model $G(\jmath \omega)$ and
the truncated $(n-1)^{th}$ reduced model $G_{n-1}(\jmath \omega)$ can be
represented by
\begin{equation}\begin{array}{l}
\label{errorsystem}
 \mathcal E_n(\jmath \omega)=G(\jmath \omega)-G_{n-1}(\jmath \omega)=: \pmatset{1}{0.36pt}
  \pmatset{0}{0.2pt}
  \pmatset{2}{8pt}
  \pmatset{3}{8pt}
  \pmatset{4}{8pt}
  \pmatset{5}{4pt}
  \pmatset{6}{4pt}    \begin{pmat}[{|}]
        \mathcal A_{en}   &    \mathcal B_{en} \cr\-
        \mathcal C_{en}   &    \mathcal D_{en} \cr
      \end{pmat}    =
  \pmatset{1}{0.36pt}
  \pmatset{0}{0.2pt}
  \pmatset{2}{4pt}
  \pmatset{3}{4pt}
  \pmatset{4}{4pt}
  \pmatset{5}{2pt}
  \pmatset{6}{2pt}
 \begin{pmat}[{.|}]
      A_{n-1}   & 0  &   B_{n-1}\cr
      0     & A  &   B  \cr\-
       -C_{n-1} & C  &   D-D_{n-1}  \cr
      \end{pmat}     \\
 \end{array}.\end{equation}\\[-8mm]
From the error system $  \mathcal E_n(\jmath \omega)$, we can
construct a dilated system $\mathscr E_n(\jmath \omega)$ as follow:\begin{equation}
  \begin{array}{l}
  \pmatset{1}{0.36pt}
  \pmatset{0}{0.2pt}
  \pmatset{2}{8pt}
  \pmatset{3}{8pt}
  \pmatset{4}{8pt}
  \pmatset{5}{4pt}
  \pmatset{6}{4pt}
\mathscr E_n(\jmath \omega)=\begin{pmat}[{|}]
      \mathscr A_{en}  & \mathscr B_{en}   \cr\-
      \mathscr C_{en}  & \mathscr D_{en}   \cr
      \end{pmat} =  \pmatset{1}{0.36pt}
  \pmatset{0}{0.2pt}
  \pmatset{2}{6pt}
  \pmatset{3}{6pt}
  \pmatset{4}{6pt}
  \pmatset{5}{4pt}
  \pmatset{6}{4pt}\begin{pmat}[{|.}]
       \mathcal A_{en}  &  \mathcal B_{en}      &  \mathcal B_{dn} \cr\-
       \mathcal C_{en}  &  \mathcal D_{en}      &  \mathcal D_{dn}^{11} \cr
       \mathcal C_{dn}  &  \mathcal D_{dn}^{12}       &   \mathcal D_{dn}^{22} \cr
      \end{pmat} \\
  \end{array},
\end{equation}
\noindent where $\mathcal B_{dn}, \mathcal C_{dn}, \mathcal D_{dn}^{12},
\mathcal D_{dn}^{21}, \mathcal D_{dn}^{22}$ are auxiliary 'dilated' matrices,
and those matrices are constructed as follows:\begin{equation}
\begin{array}{l}
 {  \mathcal B_{dn}} =  -   \sigma_{\epsilon n}(\varpi) \epsilon^{-1}(\epsilon I+\jmath \varpi I- \mathcal A_{en})  \left[ \begin{array}{l}
 Z_{n-1} \\
 -I \\
 \end{array} \right]
 {\Sigma ^{ - 1}_{\epsilon}(\varpi)}{C_{\epsilon}(\varpi)^*}, \\[4mm]
 {  \mathcal C_{dn}}^* =  - \sigma_{\epsilon n}(\varpi)\epsilon^{-1}(\epsilon I+\jmath \varpi I- \mathcal A_{en})^{T}\left[ \begin{array}{l}
 -Z_{n-1} \\
 -I \\
 \end{array} \right]
 {\Sigma ^{ - 1}_{\epsilon}(\varpi)}B_{\epsilon}(\varpi), \\[4mm]
  \mathcal D_{dn}^{12}  =  - {  \mathcal C}_{en} (\epsilon I+\jmath \varpi I- \mathcal A_{en})^{-1}  \mathcal B_{dn}  +2\sigma_{\epsilon n}(\varpi)  I,\\[2mm]
  \mathcal D_{dn}^{21}  =  - {  \mathcal C}_{dn} (\epsilon I+\jmath \varpi I- \mathcal A_{en})^{-1}  \mathcal B_{en}  +2\sigma_{\epsilon n}(\varpi)   I,\\[2mm]
  \mathcal D_{dn}^{22} =  -  {  \mathcal C}_{dn} (\epsilon I+\jmath \varpi I- \mathcal A_{en})^{-1}  \mathcal B_{dn}.\\
 \end{array}\end{equation}
\noindent Defining the Lyapunov variable $\mathscr Q_{en}=\mathscr Q_{en}^*  \ge 0$ and
$\mathscr P_{en}=\mathscr P_{en}$ as follows:
\begin{equation}
\small
\begin{array}{l}
 \mathscr Q_{en}
 =2 {\epsilon }^{-1} \left[ \begin{array}{l}
 Z_{n-1} \\
 I \\
 \end{array} \right]\Sigma_{\epsilon}(\varpi)  \left[ \begin{array}{l}
 Z_{n-1} \\
 I \\
 \end{array} \right]^T+2{\epsilon }^{-1}\sigma_{\epsilon n}(\varpi)^2   \left[ \begin{array}{l}
 -Z_{n-1} \\
 I \\
 \end{array} \right] \Sigma_\epsilon^{-1}(\varpi) \left[ \begin{array}{l}
 -Z_{n-1} \\
 I \\
 \end{array} \right]^T ,\\[4mm]
 \mathscr P_{en} = \left[ \begin{array}{l}
 Z_{n-1} \\
 I \\
 \end{array} \right] \Sigma_{\epsilon}(\varpi)  \left[ \begin{array}{l}
 Z_{n-1} \\
 I \\
 \end{array} \right]^T  +
\sigma_{\epsilon n}(\varpi)^2\left[ \begin{array}{l}
 -Z_{n-1} \\
 I \\
 \end{array} \right] \Sigma_{\epsilon}^{-1}(\varpi)  \left[ \begin{array}{l}
 -Z_{n-1} \\
 I \\
 \end{array} \right]^T.{\kern 40pt}
 \end{array}\end{equation}\\
\noindent  Substituting the above constructed Lyapunov variable
$\mathscr Q_{en}, \mathscr P_{en}$ into the following SF-type matrix inequality suggested by the Generalized KYP Lemma,
 \begin{equation}
\small
\begin{array}{l}
 \pmatset{1}{0.1pt}
  \pmatset{0}{0.1pt}
  \pmatset{2}{4pt}
  \pmatset{3}{4pt}
  \pmatset{4}{4pt}
  \pmatset{5}{4pt}
  \pmatset{6}{1pt}
  {\kern 10pt}{\begin{pmat}[{|}]
    -(\jmath \varpi I-\mathscr A_{en}) \mathscr Q_{en} (\jmath \varpi-\mathscr A_{en})^*+ \mathscr A_{en} \mathscr P_{en} + \mathscr P_{en}\mathscr A_{en}^*  + \mathscr B_{en} {\mathscr B_{en}^*}  & (\jmath \varpi I -\mathscr A_{en}) \mathscr C_{en}^*+ \mathscr P_{en}{\mathscr C_{en}^*} + \mathscr B_{en}{\mathscr D_{en}^*} \cr\-
      *  & -\mathscr C_{en}\mathscr Q_{en}\mathscr C_{en}^*+ \mathscr D_{en}{\mathscr D_{en}^*} - (2\sigma_{\epsilon n}(\varpi))^2I  \cr
  \end{pmat}} \\[4mm]
= \pmatset{1}{0.36pt}
  \pmatset{0}{0.2pt}
  \pmatset{2}{4pt}
  \pmatset{3}{4pt}
  \pmatset{4}{4pt}
  \pmatset{5}{4pt}
  \pmatset{6}{4pt}
  {\begin{pmat}[{|}]
      \Pi_{11}      & \Pi_{12}   \cr\-
      *           & \Pi_{22}    \cr
  \end{pmat}} \\
=  \pmatset{1}{0.36pt}
  \pmatset{0}{0.2pt}
  \pmatset{2}{4pt}
  \pmatset{3}{4pt}
  \pmatset{4}{4pt}
  \pmatset{5}{4pt}
  \pmatset{6}{4pt}
  {\begin{pmat}[{|.}]
      \Pi_{11}      & \Pi_{12}^{1}   & \Pi_{12}^{2}\cr\-
      *           & \Pi_{22}^{11}    & \Pi_{22}^{12}\cr
      *           & *                & \Pi_{22}^{22}\cr
  \end{pmat}}\\
 \end{array}\end{equation}

\noindent Combing the balanced SF-type frequency-dependent Lyapunov equations (\ref{SF-FDLE-balanced}), one can derive the following equations:
\begin{equation}
\label{ProofTheorem1_equ1}
\begin{array}{l}
 {\Pi _{11}}  = -(\jmath \varpi I-\mathscr A_{en}) \mathscr Q_{en} (\jmath \varpi-\mathscr A_{en})^*+ \mathscr A_{en} \mathscr P_{en} + \mathscr P_{en}\mathscr A_{en}^*  + \mathscr B_{en} {\mathscr B_{en}^*} \\ [2mm]
 {\kern 18pt}=[\epsilon^{-1} (\epsilon I+ \jmath \varpi I -\mathcal A_{en})]   \Delta_1  [\epsilon^{-1} (\epsilon I+ \jmath \varpi I -\mathcal A_{en})]^*\\
 \end{array}{\kern 58pt}\end{equation}
\begin{equation}
\label{ProofTheorem1_equ2}
\begin{array}{l}
 {\Pi _{12}^1}  = (\jmath \varpi I -\mathcal A_{en}) \mathscr Q_{en} \mathcal C_{en}^*+ \mathscr P_{en}{\mathcal C_{en}^*} + \mathcal B_{en}{\mathcal D_{en}^*}      \\[2mm]
 {\kern 18pt} = [\epsilon (\epsilon I+ \jmath \varpi I -\mathcal A_{en})]  \Delta_2   [\epsilon (\epsilon I+\jmath \varpi I -A)^{-1}]^*  C^*   \\
 \end{array}{\kern 146pt}
 \end{equation}
\begin{equation}
\label{ProofTheorem1_equ3}
\begin{array}{l}
 {\Pi _{12}^2}   = (\jmath \varpi I -\mathcal A_{en}) \mathscr Q_{en} \mathcal C_{en}^*+ \mathscr P_{en}{\mathcal C_{en}^*} + \mathcal B_{en}{\mathcal D_{en}^*}\\
  {\kern 18pt} = [\epsilon^{-1} (\epsilon I+ \jmath \varpi I -\mathcal A_{en})]   \Delta_3  \Sigma_{\epsilon}^{-1}(\varpi) [\epsilon (\epsilon I+\jmath \varpi I -A)^{-1}]  B \\
 \end{array}
 {\kern 106pt}
 \end{equation}
 \begin{equation}
\label{ProofTheorem1_equ4}\begin{array}{l}
 \pmatset{1}{0.36pt}
  \pmatset{0}{0.2pt}
  \pmatset{2}{4pt}
  \pmatset{3}{4pt}
  \pmatset{4}{18pt}
  \pmatset{5}{3pt}
  \pmatset{6}{3pt}
 {\Pi _{22}^{11}}= -\mathcal C_{en}\mathscr Q_{en} \mathcal C_{en}^*+  {\begin{pmat}[{|}]
\mathcal D_{en} & \mathcal D_{dn}^{12}  \cr
  \end{pmat}}{\begin{pmat}[{|}]
\mathcal D_{en} & \mathcal D_{dn}^{12}  \cr
  \end{pmat}^*} - (2 \sigma_{\epsilon n} (\varpi))^2 I \\[3mm]
 {\kern 18pt}=\pmatset{1}{0.1pt}
  \pmatset{0}{0.1pt}
  \pmatset{2}{1pt}
  \pmatset{3}{1pt}
  \pmatset{4}{1pt}
  \pmatset{5}{1pt}
  \pmatset{6}{1pt}
   -{\epsilon ^{ - 1}}{\mathcal C_{en}}\Delta_2 { [\epsilon (\epsilon I+\jmath \varpi I -A)^{-1}]^*}{C^*}\\ [2mm]
  {\kern 32pt}  - {\epsilon ^{ - 1}}C[\epsilon (\epsilon I+\jmath \varpi I -A)^{-1}] \Delta_2^* {\mathcal C_{en}}^*  \\
 \end{array}{\kern 118pt}\end{equation}
 \begin{equation}
\label{ProofTheorem1_equ5}
 \pmatset{1}{0.36pt}
  \pmatset{0}{0.2pt}
  \pmatset{2}{4pt}
  \pmatset{3}{4pt}
  \pmatset{4}{18pt}
  \pmatset{5}{3pt}
  \pmatset{6}{3pt}
\begin{array}{l}
 {\Pi _{22}^{12}} = -\mathcal C_{en}\mathscr Q_{en}\mathcal C_{en}^*+  {\begin{pmat}[{|}]
  \mathcal D_{en}      & \mathcal D_{dn}^{12}  \cr
  \end{pmat}}{\begin{pmat}[{|}]
  \mathcal D_{dn}^{21} & \mathcal D_{dn}^{22}  \cr
  \end{pmat}^*}  \\[3mm]
 {\kern 18pt} = - {\epsilon ^{ - 1}}C[\epsilon (\epsilon I+\jmath \varpi I -A)^{-1}] \Delta_2 \mathcal C_{dn}^* \\ [2mm]
 {\kern 30pt}  - {\epsilon ^{ - 1}}\sigma_{\epsilon n}(\varpi) {\mathcal C_{en}}\Delta_3 {\Sigma_{\epsilon}^{ - 1}(\varpi)}[\epsilon (\epsilon I+\jmath \varpi I -A)^{-1}] B      \\
 \end{array}{\kern 142pt}\end{equation}
 \begin{equation}
\label{ProofTheorem1_equ6}
\begin{array}{l}
 \pmatset{1}{0.36pt}
  \pmatset{0}{0.2pt}
  \pmatset{2}{4pt}
  \pmatset{3}{4pt}
  \pmatset{4}{18pt}
  \pmatset{5}{3pt}
  \pmatset{6}{3pt}
 {\Pi _{22}^{22}}  = -\mathcal C_{en}\mathscr Q_{en}\mathcal C_{en}^*+   \pmatset{0}{0.1pt}
 {\begin{pmat}[{|}]
\mathcal D_{dn}^{21} & \mathcal D_{dn}^{22}  \cr
  \end{pmat}}{\begin{pmat}[{|}]
\mathcal D_{dn}^{21} & \mathcal D_{dn}^{22}  \cr
  \end{pmat}^*} - (2 \sigma_{\epsilon n} (\varpi))^2 I \\[3mm]
 {\kern 18pt}=\pmatset{1}{0.1pt}
  \pmatset{0}{0.1pt}
  \pmatset{2}{1pt}
  \pmatset{3}{1pt}
  \pmatset{4}{1pt}
  \pmatset{5}{1pt}
  \pmatset{6}{1pt}
 -{\epsilon ^{ - 1}}{\sigma_{\epsilon n}^2(\varpi)}B^*[\epsilon (\epsilon I+\jmath \varpi I -A)^{-1}]^*{\Sigma_{\epsilon}^{ - 1}(\varpi)} \Delta_3 {\Sigma_{\epsilon}^{ - 1}(\varpi)}[\epsilon (\epsilon I+\jmath \varpi I -A)^{-1}] B\\[2mm]
  {\kern 30pt} - {\epsilon ^{ - 1}}{\sigma_{\epsilon n}^2(\varpi)}  B^*[\epsilon (\epsilon I+\jmath \varpi I -A)^{-1}]^*{\Sigma_{\epsilon}^{ - 1}(\varpi)} \Delta_3^* {\Sigma_{\epsilon}^{ - 1}(\varpi)} [\epsilon (\epsilon I+\jmath \varpi I -A)^{-1}] B     \\
 \end{array} \end{equation}

where

\begin{equation}
\small
\label{ProofTheorem1_equ7}
\begin{array}{l}
 {\Delta _1} \\
  =\pmatset{1}{0.1pt}
  \pmatset{0}{0.1pt}
  \pmatset{2}{1pt}
  \pmatset{3}{1pt}
  \pmatset{4}{1pt}
  \pmatset{5}{1pt}
  \pmatset{6}{1pt}
    {\begin{pmat}[{|}]
   A_{r\epsilon}(\varpi) &  0\cr\-
   0   & A_{\epsilon}(\varpi)  \cr
  \end{pmat}}
    \left({\begin{pmat}[{|}]
   Z_{n-1}\Sigma_\epsilon (\varpi) {Z_{n-1}^T} & Z_{n-1}\Sigma_\epsilon (\varpi)   \cr\-
    \Sigma_\epsilon (\varpi) {Z_{n-1}^T}  &   \Sigma_\epsilon (\varpi)   \cr
 \end{pmat}} + {\sigma_{\epsilon n}^2(\varpi)}  {\begin{pmat}[{|}]
    Z_{n-1}\Sigma_\epsilon (\varpi)^{-1}  {Z_{n-1}^T} & -Z_{n-1}\Sigma_\epsilon (\varpi)^{-1}  \cr\-
    -\Sigma_\epsilon (\varpi)^{-1}  {Z_{n-1}^T} &   \Sigma_\epsilon (\varpi)^{-1}  \cr
 \end{pmat}} \right)   \\[4mm]
  \pmatset{1}{0.1pt}
  \pmatset{0}{0.1pt}
  \pmatset{2}{1pt}
  \pmatset{3}{1pt}
  \pmatset{4}{1pt}
  \pmatset{5}{1pt}
  \pmatset{6}{1pt}
\; + \left({\begin{pmat}[{|}]
    Z_{n-1}\Sigma_\epsilon (\varpi) {Z_{n-1}^T} & Z_{n-1} \Sigma_\epsilon (\varpi)   \cr\-
     \Sigma_\epsilon (\varpi) {Z_{n-1}^T} &    \Sigma_\epsilon (\varpi)  \cr
 \end{pmat}} + {\sigma_{\epsilon n}^2(\varpi)}
 {\begin{pmat}[{|}]
   {Z_{n-1}\Sigma_\epsilon (\varpi)^{-1}  {Z_{n-1}^T}} & -{Z_{n-1}\Sigma_\epsilon (\varpi)^{-1}  }  \cr\-
   -{\Sigma_\epsilon (\varpi)^{-1}  {Z_{n-1}^T}} & \Sigma_\epsilon (\varpi)^{-1}   \cr
 \end{pmat}}\right) {\begin{pmat}[{|}]
   {{A_{r\epsilon }}(\varpi )} & 0  \cr\-
   0 & {{A_\epsilon }(\varpi )}  \cr
 \end{pmat}^*}\\[4mm]
 \pmatset{1}{0.1pt}
  \pmatset{0}{0.1pt}
  \pmatset{2}{1pt}
  \pmatset{3}{1pt}
  \pmatset{4}{1pt}
  \pmatset{5}{1pt}
  \pmatset{6}{1pt}
  \; + {\begin{pmat}[{}]
 Z_{n-1}{B_\epsilon }(\varpi ) \cr\cr
 {B_\epsilon }(\varpi ) \cr
 \end{pmat}}{\begin{pmat}[{}]
 Z_{n-1}{B_\epsilon }(\varpi ) \cr \cr
 {B_\epsilon }(\varpi ) \cr
 \end{pmat}}^*+   {\sigma_{\epsilon n}^2(\varpi)} {\begin{pmat}[{}]
 Z_{n-1}\Sigma_\epsilon (\varpi)^{-1} {C_\epsilon }^*(\varpi ) \cr
\Sigma_\epsilon (\varpi)^{-1} {C^*}_\epsilon (\varpi ) \cr
 \end{pmat}}
{\begin{pmat}[{}]
 Z_{n-1}\Sigma_\epsilon (\varpi)^{-1} {C_\epsilon }^*(\varpi ) \cr
\Sigma_\epsilon (\varpi)^{-1} {C^*}_\epsilon (\varpi ) \cr
 \end{pmat}^*}
   \\[4mm]
  = 0 \\
 \end{array}
 \end{equation}

  \begin{equation}
  \small
  \label{ProofTheorem1_equ8}
  \begin{array}{*{20}{l}}
  \Delta_2\\
  =\pmatset{1}{0.2pt}
  \pmatset{0}{0.2pt}
  \pmatset{2}{1pt}
  \pmatset{3}{1pt}
  \pmatset{4}{1pt}
  \pmatset{5}{1pt}
  \pmatset{6}{1pt}    {\begin{pmat}[{|}]
        Z_{n-1}\Sigma_\epsilon (\varpi) Z_{n-1}^T   & Z_{n-1} \Sigma_ \epsilon (\varpi)   \cr\-
       \Sigma_\epsilon (\varpi) Z_{n-1}^T  &  \Sigma_\epsilon(\varpi)  \cr
  \end{pmat}}{\begin{pmat}[{}]
       -Z_{n-1}  \cr
         I \cr
  \end{pmat}}  \\
\;+\sigma^2_{\epsilon n}(\varpi) \pmatset{1}{0.2pt}
  \pmatset{0}{0.2pt}
  \pmatset{2}{1pt}
  \pmatset{3}{1pt}
  \pmatset{4}{1pt}
  \pmatset{5}{1pt}
  \pmatset{6}{1pt}
   {\begin{pmat}[{|}]
         Z_{n-1}\Sigma_\epsilon (\varpi)^{-1} Z_{n-1}^T  &  - Z_{n-1}\Sigma_\epsilon (\varpi)^{-1}   \cr\-
        -  \Sigma_\epsilon (\varpi)^{-1} Z_{n-1}^T    &  \Sigma_\epsilon (\varpi)^{-1}    \cr
  \end{pmat}}{\begin{pmat}[{}]
       -Z_{n-1}  \cr
         I \cr
  \end{pmat}}  +2 \sigma_{\epsilon n} (\varpi)  {\begin{pmat}[{}]
         \sigma_{\epsilon n}(\varpi)  Z_{n-1}\Sigma_\epsilon (\varpi)^{-1}  \cr
         - \sigma_{\epsilon n}(\varpi) \Sigma_\epsilon (\varpi)^{-1}   \cr
  \end{pmat}}   \\
  = 0 \\
\end{array}{\kern 46pt} \end{equation}

\begin{equation}
  \label{ProofTheorem1_equ9}
  \small
  \begin{array}{*{20}{l}}
\Delta_3\\
  =\pmatset{1}{0.2pt}
  \pmatset{0}{0.2pt}
  \pmatset{2}{1pt}
  \pmatset{3}{1pt}
  \pmatset{4}{1pt}
  \pmatset{5}{1pt}
  \pmatset{6}{1pt}
{\begin{pmat}[{|}]
        Z_{n-1}\Sigma_\epsilon (\varpi)Z_{n-1}^T   & Z_{n-1}\Sigma_\epsilon (\varpi)   \cr\-
       \Sigma_\epsilon (\varpi) Z_{n-1}^T  &  \Sigma_\epsilon (\varpi)  \cr
  \end{pmat}}  {\begin{pmat}[{}]
       -Z_{n-1}  \cr
         -I \cr
  \end{pmat}}+2 \sigma_{\epsilon n}(\varpi){\begin{pmat}[{}]
          \sigma_{\epsilon n}^{-1}(\varpi) Z_{n-1}\Sigma_\epsilon (\varpi)  \cr
          \sigma_{\epsilon n}^{-1}(\varpi) \Sigma_\epsilon (\varpi)   \cr
  \end{pmat}}   \\
 \;+\sigma^2_{\epsilon n}(\varpi)\pmatset{1}{0.2pt}
  \pmatset{0}{0.2pt}
  \pmatset{2}{1pt}
  \pmatset{3}{1pt}
  \pmatset{4}{1pt}
  \pmatset{5}{1pt}
  \pmatset{6}{1pt}
 {\begin{pmat}[{|}]
         Z_{n-1}\Sigma_\epsilon (\varpi)^{-1} Z_{n-1}^T  &  - Z_{n-1}\Sigma_\epsilon (\varpi)^{-1}   \cr\-
        - \Sigma_\epsilon (\varpi)^{-1}  Z_{n-1}^T    &   \Sigma_\epsilon (\varpi)^{-1}    \cr
  \end{pmat}} {\begin{pmat}[{}]
        -Z_{n-1}  \cr
         -I \cr
  \end{pmat}}       \\
  = 0 \\
\end{array} {\kern 114pt} \end{equation}

\noindent According to the Generalized KYP Lemma, the dilate error
systems $\mathcal E_n(\jmath\omega)$
 satisfying
 \[\sigma_{max} ( {\mathcal E_n(\jmath\varpi )} )  \le 2\sigma_{\epsilon n}(\varpi)  \]
therefore the error system $\mathcal E_n(\jmath\omega)$ satisfying
\[  \sigma_{max} ({\mathcal E_n(\jmath\varpi )} ) \le  \sigma_{max}({\mathcal E_n(\jmath\varpi )}) \le 2\sigma_{\epsilon n}(\varpi) \]
\noindent  This completes the SF-type error bound (\ref{SF-EB-SF}) for the $r=n-1$ case. The remainder of the proof for the $r=n-2,...1$ cases can be
easily completed in a reciprocal way.\\

\noindent 2). From (\ref{reducedmodel-SF}) and (\ref{Reduced_SF-FDES}), it can be concluded that the SF-type frequency-dependent extended system $G_{r\epsilon \varpi}(\jmath \omega)$ of reduced system $G_{r}(\jmath\omega)$ can be obtained by applying the standard FIBT algorithm for $G_{\epsilon \varpi}(\jmath \omega)$. Therefore, we have
\begin{equation}\begin{array}{l}
\sigma_{max}(G_{\epsilon \varpi}(\jmath \omega)  - G_{r \epsilon \varpi}(\jmath \omega)) \le 2 \sum\limits_{i = r + 1}^{n} {{\sigma _{i\varpi}}}, \; for \; all \; \omega \in [-\infty,+\infty] \\
 \end{array}\end{equation}
\noindent Noting that \begin{equation}
\small
\begin{array}{l}
G(\jmath \omega ) - {G_r}(\jmath \omega)
 = (G_{\epsilon\varpi}(\jmath \omega ) - G_{r\epsilon \varpi}(\jmath \omega)) +( G(\jmath \omega)-G_{\epsilon \varpi}(\jmath \omega) )+(G_{r \epsilon \varpi}(\jmath \omega)-G_r(\jmath \omega)). \\
 \end{array}\end{equation}
Using triangle inequality we get\begin{equation}
\small
\begin{array}{l}
\sigma_{max}(G(\jmath \omega ) - {G_r}(\jmath \omega))\\
\le  \sigma_{max}(G_{\epsilon\varpi}(\jmath \omega ) - G_{r\epsilon \varpi}(\jmath \omega))+\sigma_{max}(G_(\jmath \omega ) - G_{\epsilon \varpi}(\jmath \omega))+\sigma_{max}(G_{r}(\jmath \omega ) - G_{r\epsilon \varpi}(\jmath \omega))\\
  \le 2 \sum\limits_{i = r + 1}^{n} {{\sigma _{i\varpi}}} {\kern 1pt} +{\left\| G(\jmath \omega)-G_{\epsilon \varpi}(\jmath \omega) \right\|_\infty }+{\left\| G_r(\jmath \omega)-G_{r \epsilon \varpi}(\jmath \omega) \right\|_\infty }, {\kern 12pt} \; for \; all \; \omega \in [-\infty,+\infty] \\
 \end{array}\end{equation}
\noindent This completes the proof of entire-frequency error bound (\ref{EF-EB-SF}). 
\end{proof}

\noindent Based on above preliminaries and results, we now at the stage to present the SF-type frequency-dependent balanced truncation algorithm (see Algorithm 1).\\

\begin{spacing}{2.5}
\begin{algorithm}
\caption{SF-type FDBT}
\begin{algorithmic}
\REQUIRE {Full-order model $(A,B,C,D)$, frequency $(\varpi)$, user-defined parameter $\epsilon$ and the order of reduced model $(r)$}, \\[3mm]
  \textbf{Step 1.} Solve the SF-type frequency-dependent Lyapunov equations (\ref{SF-FDLE})\\ [3mm]

  \textbf{Step 2.}  Get the SF-type frequency-dependent balanced realization of the given system by coordinate transformation:
\begin{equation}
\small
\begin{array}{l}
 \pmatset{1}{0.36pt}
  \pmatset{0}{0.2pt}
  \pmatset{2}{4pt}
  \pmatset{3}{4pt}
  \pmatset{4}{4pt}
  \pmatset{5}{4pt}
  \pmatset{6}{4pt}    \begin{pmat}[{|}]
       A_{\epsilon b}(\varpi)   &   B_{\epsilon b}(\varpi)  \cr\-
       C_{\epsilon b}(\varpi)   &   D_{\epsilon b}(\varpi)  \cr
      \end{pmat}    =
  \pmatset{1}{0.36pt}
  \pmatset{0}{0.2pt}
  \pmatset{2}{4pt}
  \pmatset{3}{4pt}
  \pmatset{4}{4pt}
  \pmatset{5}{4pt}
  \pmatset{6}{4pt}
 \begin{pmat}[{|}]
      T_\epsilon^{-1}(\varpi) A_\epsilon(\varpi) T_\epsilon(\varpi)   & T_\epsilon^{-1}(\varpi)B_\epsilon(\varpi)  \cr\-
      C_\epsilon(\varpi) T_\epsilon(\varpi)    & D_\epsilon(\varpi) +C_{\epsilon b}(\varpi){(\epsilon I + \jmath \varpi I - A_{\epsilon b}(\varpi))^{ - 1}}B_{\epsilon b}(\varpi) \cr
      \end{pmat}     \\
 \end{array},\end{equation}

where $T_\epsilon(\varpi)$ is a matrix that simultaneously diagonalize the matrices $W_{c\epsilon}(\varpi)$ and $W_{o\epsilon}(\varpi)$, i.e.,

\[\small T^{-1}_\epsilon(\varpi)W_{c\epsilon}(\varpi)T_\epsilon(\varpi)=T_\epsilon^{*}(\varpi)W_{o\epsilon}(\varpi)T^{-*}_{\epsilon}(\varpi)=\Sigma_\epsilon(\varpi),\]
\textbf{Step 3.} Compute the reduced-order model as:
\begin{equation}
\label{truncatedsystemc}\begin{array}{l}
 A_r=\jmath \varpi I-\epsilon Z_r(\jmath \varpi I-A_{\epsilon b}(\varpi)){Z_r^T} (\epsilon I -Z_r(\jmath \varpi I-A_{\epsilon b}(\varpi)){Z_r^T})^{-1},\\
 B_r = \epsilon^{-1}(\epsilon  I + \jmath \varpi I - A_r)Z_rB_{\epsilon b}(\varpi), \\
 C_r = \epsilon^{-1} C_{\epsilon b}(\varpi)Z_r^T(\epsilon I + \jmath \varpi I - A_r),\\
 D_r =D_{\epsilon b}(\varpi) -C_r(\epsilon  I + \jmath \varpi I - A_r)^{-1}B_r. \\
 \end{array}\end{equation}
 \ENSURE Reduced-order model $(A_r,B_r,C_r,D_r)$
\end{algorithmic}
\end{algorithm}

\end{spacing}

\begin{remark} According to Proposition 3, the SF-type error bound can be regulated to an arbitrary small value by decreasing the parameter $\epsilon$, in other word, arbitrary approximation accuracy at the given frequency point $\omega=\varpi$ can be achieved. To make the approximation performance over the neighboring intervals ($\omega \in [\varpi-\delta, \varpi+\delta]$) be satisfactory, the value of parameter $\epsilon$ should be selected carefully. One possible way to pick an appropriate value of $\epsilon$ is to plot the curves of SF-type error bound (\ref{SF-EB-SF}) and EF-type error bound (\ref{EF-EB-SF}) with respect to the parameter $\epsilon$, then one can choose a proper value $\epsilon^*$ which make the SF-type and EF-type error bound be traded off against each other. Furthermore, it is suggested to adopt the value of $\epsilon$ be smaller than $\epsilon^*$ if there exists an estimation ($\hat \delta$) on the size of the uncertain frequency interval ($\delta$). The smaller $\hat \delta$ is, the smaller value of $\epsilon$ could be.
\end{remark}~\\

\begin{remark} For the sake of theoretical completeness, the SF-type FDBT approach is developed in a complex setting. The original system matrices and the reduced system matrices are allowed to be complex. In many applications, only real systems are of practical interest. With real model restriction, the proposed SF-type FDBT can only be applied in the case that $\varpi=0$. It is easy to find that the involved matrices $W_{c\epsilon}(\varpi), W_{o\epsilon}(\varpi), T_{\epsilon}(\varpi)$ and the generated reduced model $A_r,B_r,C_r,D_r$ are all real if the original system is real and the frequency point is $\varpi=0$. In the framework of balancing related methods, the proposed SF-type FDBT is not the only way for solving model order reduction problems assuming the dominating frequency is $\varpi=0$. As referred to in Section I, SPA is also regarded as an effective way for improving the approximation performance over low-frequency ranges. However, it should be noticed that the underlying mechanisms and the algorithms of SPA and SF-type FDBT are totally different. Which one will performs better on low-frequency approximation accuracy improvement depends on the given original system model. From the results of Example 3 in Section 5, to say the least, the proposed SF-type FDBT can be viewed as a new non-trivial alternative option besides SPA. 
\end{remark}~\\

\begin{remark} It is well-known that the conventional balanced truncation methods (such as the above mentioned FIBT, SPA, FWBT and FGBT) are developed for stable systems. To make those methods applicable for unstable system, some techniques like stable part and unstable part decomposition should be combined \cite{BT2_Survey} \cite{BT_unstable_Zhou} \cite{BT_unstable_Peter}. According to Proposition 2, one can always find a stable SF-type frequency-dependent extended system by choosing a proper $\epsilon$, even if the given original system is unstable. Thus, the SF-type FDBT can be used for coping with model reduction of unstable systems directly. The corresponding cost is that it cann't guarantee the generated reduced model is stable even if the original system is stable.\end{remark}~\\

\section{Frequency-dependent Balanced Truncation over Known Frequency-Intervals}

In this section, we present our results for the cases that the operating frequency belongs to a pre-known limited interval, i.e. $\omega \in [\varpi_1, \varpi_2]$. We will present some related definitions first and then show the related results and the interval-type frequency-dependent balanced truncation algorithm.\\

\begin{definition}[Interval-type Frequency-dependent Extend systems]\label{Int-FDES} Given a linear continuous-time system (\ref{originalsystem}) and a pre-known frequency interval ($\omega \in [\varpi_1,\varpi_2]$), one can construct an interval-type frequency-dependent extended system as follows: \begin{equation}
 \label{interval-FDES}
 G_{\varpi_1,\varpi_2}(\jmath \omega): \pmatset{1}{0.1pt}
  \pmatset{0}{0.1pt}
  \pmatset{2}{4pt}
  \pmatset{3}{4pt}
  \pmatset{4}{4pt}
  \pmatset{5}{4pt}
  \pmatset{6}{1pt}
  {\begin{pmat}[{|}]
       A (\varpi_1,\varpi_2) &  B(\varpi_1,\varpi_2) \cr\-
       C (\varpi_1,\varpi_2) &  D(\varpi_1,\varpi_2) \cr
  \end{pmat}},\end{equation}
\noindent  where  \[\begin{array}{l}
 A(\varpi_1,\varpi_2)=A,\\
 B(\varpi_1,\varpi_2)=[\varpi_d^2(\jmath\varpi _1I - A)^{ - 1}(\jmath\varpi _2I - A)^{ - 1}]^{\frac{1}{2}\star} B,\\
 C(\varpi_1,\varpi_2)=C [\varpi_d^2(\jmath\varpi _1I - A)^{ - 1}(\jmath\varpi _2I - A)^{ - 1}]^{\frac{1}{2}\star},\\
 D(\varpi_1,\varpi_2)=D+C[(\jmath \varpi_c I-A)(\jmath\varpi _1I - A)^{ - 1}(\jmath\varpi _2I - A)]^{ - 1} B,\\
 \varpi_d=(\varpi_2-\varpi_1)/2, {\kern 4pt} \varpi_c=(\varpi_2+\varpi_1)/2.\\
 \end{array}\]
\end{definition}~\\

\begin{definition}[Interval-type Frequency-dependent Lyapunov Equations] \label{def-Int-FDLE} Given a linear continuous-time system (\ref{originalsystem})
and a pre-specified frequency interval ($\omega \in [\varpi_1, \varpi_2]$), then the following
two Lyapunov equation 
\begin{equation}
\label{interval-FDLE}
\begin{array}{l}
A(\varpi_1,\varpi_2)W_c(\varpi_1,\varpi_2)+W_c(\varpi_1,\varpi_2)A^*(\varpi_1,\varpi_2)+B(\varpi_1,\varpi_2) B^*(\varpi_1,\varpi_2)=0\\
A^*(\varpi_1,\varpi_2)W_o(\varpi_1,\varpi_2)+W_o(\varpi_1,\varpi_2) A(\varpi_1,\varpi_2)+C^*(\varpi_1,\varpi_2) C(\varpi_1,\varpi_2)=0 \\
 \end{array}
 \end{equation}
\noindent  are defined as interval-type frequency-dependent controllability and
observability Lyapunov equations of the continuous-time system
(\ref{originalsystem}). Furthermore, the solutions $W_c(\varpi_1,\varpi_2)$ and $
W_o(\varpi_1,\varpi_2)$ will be referred to as interval-type frequency-dependent controllability and observability Gramians of the continuous-time system (\ref{originalsystem})
\end{definition}~\\

\begin{definition}[Interval-type Frequency-dependent Balanced Realization] \label{def-Int-FDBR} Given a linear continuous-time system (\ref{originalsystem})
and a pre-specified frequency interval ($\omega \in [\varpi_1, \varpi_2]$),  the corresponding interval-type frequency-dependent
controllability and observability Gramians are equal and diagonal, i.e. the following Lyapunov equations 
\begin{equation}
\label{Interval-FDLE-Balanced}\begin{array}{l}
A(\varpi_1,\varpi_2) \Sigma(\varpi_1,\varpi_2)+\Sigma(\varpi_1,\varpi_2) A^*_(\varpi_1,\varpi_2)+B(\varpi_1,\varpi_2)B^*(\varpi_1,\varpi_2)=0\\
A^*(\varpi_1,\varpi_2)\Sigma(\varpi_1,\varpi_2)+\Sigma(\varpi_1,\varpi_2)A(\varpi_1,\varpi_2)+C^*(\varpi_1,\varpi_2)C(\varpi_1,\varpi_2)=0 \\
 \end{array}\end{equation}
simultaneously hold, then this particular realization will be referred to as interval-type frequency-dependent balanced realization.
\end{definition}~\\

\begin{theorem}[Interval-type Frequency-dependent Balanced Truncation]\label{the-Int-FDBT} Given a linear continuous-time system (\ref{originalsystem})
with a pre-specified frequency interval ($\omega \in [\varpi_1, \varpi_2]$), and assume the system is given in interval-type frequency-dependent balanced
realization with respect to the interval-type frequency-dependent Gramian: \[\Sigma(\varpi_1,\varpi_2)=diag{(\sigma_{1}(\varpi_1,\varpi_2),...,\sigma_{r}(\varpi_1,\varpi_2), ...,\sigma_{n}(\varpi_1,\varpi_2))},\]\emph{and} $\sigma_{1}(\varpi_1,\varpi_2) \geq
...\geq\sigma_{r}(\varpi_1,\varpi_2)\geq...\geq\sigma_{n}(\varpi_1,\varpi_2),$
 the desired $r^{th}$ reduced-order model 
 $  \pmatset{1}{0.36pt}
  \pmatset{0}{0.2pt}
  \pmatset{2}{2pt}
  \pmatset{3}{2pt}
  \pmatset{4}{2pt}
  \pmatset{5}{2pt}
  \pmatset{6}{2pt}
G_r(\jmath \omega):= {   \begin{pmat}[{|}]
       A_r &  B_r \cr\-
      C_r    & D_r    \cr
  \end{pmat}}
$  is given by: 
\begin{equation}
\label{reducedmodel-Interval}\begin{array}{l}
 A_r =  Z_r A Z_r^T,\\
 B_r = [\varpi_d^2(\jmath\varpi _1I - A_r)^{ - 1}(\jmath\varpi _2I - A_r)^{ - 1}]^{-\frac{1}{2}\star} Z_r B(\varpi_1,\varpi_2), \\
 C_r = C(\varpi_1,\varpi_2) Z_r^T [\varpi_d^2(\jmath\varpi _1I - A_r)^{ - 1}(\jmath\varpi _2I - A_r)^{ - 1}]^{-\frac{1}{2}\star},\\
 {D_r} = D(\varpi_1,\varpi_2)-C_r[(\jmath \varpi_c I-A)(\jmath\varpi _1I - A_r)^{ - 1}(\jmath\varpi _2I - A_r)]^{ - 1}B_r, \\
 \end{array}\end{equation}

\noindent  where $
 {Z_r} =[I^{r \times r} {\kern 6pt}  0^{r \times (n - r)}]
$. Furthermore, the truncated model $G_r(\jmath \omega)$ possesses
the following properties: \\
 \noindent  1). If the original system is stable then the reduced system is stable.  \\
 \noindent  2). The approximation error between the original system model
(\ref{originalsystem}) and the truncated $r^{th}$ reduced model (\ref{reducedmodel-Interval})
over the given frequency interval ($\omega \in [\varpi_1,\varpi_2]$) satisfies the following interval-type error bound: 
\begin{equation}\small
\label{Interval-EB-Interval}
\sigma_{max}({G(j\omega ) - {G_r}(j\omega )}) \le  \sum\limits_{i = r + 1}^{n} {{\sqrt {\eta _{i}(\varpi_1,\varpi_2)}}}, {\kern 15pt} for {\kern 4pt}all {\kern 4pt} \omega \in  [\varpi_1, \varpi_2],
\end{equation}
 where 
   \begin{equation}
 \small \eta_i(\varpi_1,\varpi_2) = {\sigma _{max}}\left(  (2{\sigma _i(\varpi_1,\varpi_2)})^2 I +  He \left(-\mathscr C_{ei} \mathcal N_{ei} \mathscr B_{ei} He( [
   0 {\kern 6pt} I  ]^T (2{\sigma _i(\varpi_1,\varpi_2)}) [ I  {\kern 6pt} 0 ] \right) \right)
   \end{equation}
 and 
  \begin{equation}\label{dilatematrix1}
\scriptsize
\mathscr B_{ei}=\begin{array}{l}
  \pmatset{1}{0.36pt}
  \pmatset{0}{0.2pt}
  \pmatset{2}{8pt}
  \pmatset{3}{8pt}
  \pmatset{4}{8pt}
  \pmatset{5}{12pt}
  \pmatset{6}{8pt}    \begin{pmat}[{|}]
      \mathcal B_{ei}   &   \mathcal B_{di} \cr
           \end{pmat}    = \begin{pmat}[{|}]
      \mathcal M^{-1}_{ei} \left[ \begin{array}{l}
  Z_{i-1}\\
 \;\; Z_{i} \\
 \end{array} \right]  B(\varpi_1,\varpi_2)  &   \sigma_i(\varpi_1,\varpi_1) \mathcal M^{-1}_{ei} \Sigma_{ei}^{-1}(\varpi_1,\varpi_2) \left[ \begin{array}{l}
 \;\; Z_{i-1}\\
-Z_{i} \\
 \end{array} \right]  C^*(\varpi_1,\varpi_2)   \cr
           \end{pmat}
    \\
 \end{array},{\kern 42pt}\end{equation} \begin{equation}\label{dilatematrix2}
\scriptsize
\mathscr C_{ei}^*=\begin{array}{l}
  \pmatset{1}{0.36pt}
  \pmatset{0}{0.2pt}
  \pmatset{2}{8pt}
  \pmatset{3}{8pt}
  \pmatset{4}{8pt}
  \pmatset{5}{12pt}
  \pmatset{6}{8pt}    \begin{pmat}[{|}]
      \mathcal C_{ei}^*   &   \mathcal C_{di}^* \cr
           \end{pmat}
             = \begin{pmat}[{|}]
      \mathcal M_{ei}^{-*} \left[ \begin{array}{l}
 -Z_{i-1}\\
 \;\; Z_{i} \\
 \end{array} \right] C^*(\varpi_1,\varpi_2)
 &     \sigma_i(\varpi_1,\varpi_2) \mathcal M_{ei}^{-*} \Sigma_{ei}^{-1}(\varpi_1,\varpi_2) \left[ \begin{array}{l}
 - Z_{i-1}\\
 -Z_{i} \\
 \end{array} \right] B(\varpi_1,\varpi_2)    \cr
           \end{pmat}
    \\
 \end{array}, {\kern 48pt}\end{equation}
 \begin{equation}
\label{dilatematrix3}
\scriptsize
 \mathcal N_{ei} = diag\{N_{i-1}, N_{i}\}= diag\{ [(\jmath \varpi_c I-A_{i-1})(\jmath\varpi _1I - A_{i-1})^{ - 1}(\jmath\varpi _2I - A_{i-1})^{ - 1}], [(\jmath \varpi_c I-A_{i})(\jmath\varpi _1I - A_{i})^{ - 1}(\jmath\varpi _2I - A_{i})^{ - 1}]\},
\end{equation}\begin{equation}
\label{dilatematrix4}
\scriptsize
 \mathcal M_{ei}=diag\{M_{i-1},{\kern 4pt} M_{i}\}=  diag\{ [\varpi_d^2(\jmath\varpi _1I - A_{i-1})^{ - 1}(\jmath\varpi _2I - A_{i-1})^{-1}]^{\frac{1}{2}}, {\kern 4pt} [\varpi_d^2(\jmath\varpi _1I - A_{i})^{ - 1}(\jmath\varpi _2I - A_{i})^{-1}]^{\frac{1}{2}}\}, {\kern 44pt}
\end{equation}\begin{equation}
\label{dilatematrix5}
\scriptsize
 \Sigma_{ei}(\varpi_1,\varpi_2)=diag\{\Sigma_{i-1} (\varpi_1,\varpi_2),{\kern 4pt} \Sigma_{i}(\varpi_1,\varpi_2))\}= diag\{ Z_{i-1} \Sigma(\varpi_1,\varpi_2) Z^T_{i-1}, {\kern 4pt} Z_{i} \Sigma(\varpi_1,\varpi_2) Z^T_{i}\}. {\kern 100pt}
\end{equation}

\noindent  3). The approximation error between the original system model
(\ref{originalsystem}) and the truncated $r^{th}$ reduced model (\ref{reducedmodel-Interval})
over entire frequency range satisfies the following EF-type error bound: 
\begin{equation}
\label{EF-EB-Interval}
\begin{array}{l}
{ {\sigma_{max}(G(\jmath \omega ) - {G_r}(\jmath \omega)})
 } \le 2\sum\limits_{i = r + 1}^{n} {{\sigma _{i}(\varpi_1,\varpi_2)}} {\kern 22pt}\\
{\kern 114pt}+{\left\| G(\jmath \omega)-G_{\varpi_1,\varpi_2}(\jmath \omega) \right\|_\infty }\\
{\kern 114pt}+{\left\| G_r(\jmath \omega)-G_{r \varpi_1,\varpi_2}(\jmath \omega) \right\|_\infty }, {\kern 8pt} for {\kern 4pt} all {\kern 4pt} \omega \in [-\infty,+\infty]. \\
 \end{array}
 \end{equation}
\noindent  where $G_{r \varpi_1,\varpi_2}(\jmath \omega)$ represents the corresponding interval-type frequency-dependent extended system of reduced system $G_{r}(\jmath \omega)$, i.e. 
\begin{equation}
\small
 \label{interval-mapping}
 G_{r\varpi_1,\varpi_2}(\jmath \omega): \pmatset{1}{0.1pt}
  \pmatset{0}{0.1pt}
  \pmatset{2}{4pt}
  \pmatset{3}{4pt}
  \pmatset{4}{4pt}
  \pmatset{5}{4pt}
  \pmatset{6}{1pt}
  {\begin{pmat}[{|}]
       A_r(\varpi_1,\varpi_2) &  B_r(\varpi_1,\varpi_2) \cr\-
       C_r(\varpi_1,\varpi_2) &  D_r(\varpi_1,\varpi_2) \cr
  \end{pmat}},\end{equation}
\noindent \emph{where} \[\small\begin{array}{l}
 A_r(\varpi_1,\varpi_2)=A_r=Z_rA(\varpi_1,\varpi_2)Z_r^T,\\
 B_r(\varpi_1,\varpi_2)=[\varpi_d^2(\jmath\varpi _1I - A_r)^{ - 1}(\jmath\varpi _2I - A_r)^{ - 1}]^{\frac{1}{2}\star} B_r=Z_rB(\varpi_1,\varpi_2),\\
 C_r(\varpi_1,\varpi_2)=C_r [\varpi_d^2(\jmath\varpi _1I - A_r)^{ - 1}(\jmath\varpi _2I - A_r)^{ - 1}]^{\frac{1}{2}\star}=C(\varpi_1,\varpi_2)Z_r^T,\\
 D_r(\varpi_1,\varpi_2)=D_r+C_r[(\jmath \varpi_c I-A_r)(\jmath\varpi _1I - A_r)^{ - 1}(\jmath\varpi _2I - A_r)]^{ - 1} B_r=D(\varpi_1,\varpi_2).\\
 \end{array}\]
\end{theorem}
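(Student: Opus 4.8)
The plan is to mirror the structure of the proof of Theorem~\ref{the-SF-FDBT}, exploiting the fact that the interval-type frequency-dependent extended system $G_{\varpi_1,\varpi_2}(\jmath\omega)$ is obtained from $G(\jmath\omega)$ by the Moebius-type substitution encoded in Definition~\ref{Int-FDES}, and that the reduced model $G_r(\jmath\omega)$ of \eqref{reducedmodel-Interval} has been defined precisely so that its interval-type extension $G_{r\varpi_1,\varpi_2}(\jmath\omega)$ coincides with the \emph{standard} balanced truncation of $G_{\varpi_1,\varpi_2}(\jmath\omega)$ (this identity is exactly the content of the displayed formulas at the end of the theorem: $A_r(\varpi_1,\varpi_2)=Z_rA(\varpi_1,\varpi_2)Z_r^T$, $B_r(\varpi_1,\varpi_2)=Z_rB(\varpi_1,\varpi_2)$, etc.). I would first verify that identity by direct substitution, since everything downstream rests on it.

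For part~1 (stability), I would argue that $G_{\varpi_1,\varpi_2}(\jmath\omega)$ shares the state matrix $A(\varpi_1,\varpi_2)=A$ with the original system, hence is stable whenever $G$ is; then $G_{r\varpi_1,\varpi_2}(\jmath\omega)$, being a standard balanced truncation of a stable system, is stable by Moore's theorem, and its state matrix $A_r(\varpi_1,\varpi_2)=A_r=Z_rAZ_r^T$ is therefore Hurwitz. Since $G_r$ and $G_{r\varpi_1,\varpi_2}$ share the same state matrix $A_r$, the reduced system $G_r(\jmath\omega)$ is stable. For part~3 (the EF-type bound), I would proceed exactly as in the proof of \eqref{EF-EB-SF}: apply the standard $2\sum_{i=r+1}^n\sigma_i(\varpi_1,\varpi_2)$ error bound to the pair $\big(G_{\varpi_1,\varpi_2},\,G_{r\varpi_1,\varpi_2}\big)$, then insert the telescoping decomposition $G-G_r=(G_{\varpi_1,\varpi_2}-G_{r\varpi_1,\varpi_2})+(G-G_{\varpi_1,\varpi_2})+(G_{r\varpi_1,\varpi_2}-G_r)$ and apply the triangle inequality for $\sigma_{max}$.

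The substantive part is~2, the interval-type bound. Following the one-step ($r=n-1$, and inductively $r=n-2,\dots,1$) scheme of Theorem~\ref{the-SF-FDBT}, I would form the error system $\mathcal E_i(\jmath\omega)=G_{i-1}(\jmath\omega)-G_i(\jmath\omega)$ (or more precisely its interval-type extension), build the dilated system using the auxiliary matrices $\mathscr B_{ei},\mathscr C_{ei},\mathcal N_{ei},\mathcal M_{ei},\Sigma_{ei}(\varpi_1,\varpi_2)$ listed in \eqref{dilatematrix1}--\eqref{dilatematrix5}, and then exhibit explicit symmetric Lyapunov certificates $P,Q$ (with $Q>0$) built out of $\Sigma_{ei}(\varpi_1,\varpi_2)$ and $\sigma_i(\varpi_1,\varpi_2)^2\Sigma_{ei}^{-1}(\varpi_1,\varpi_2)$, in analogy with $\mathscr P_{en},\mathscr Q_{en}$. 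Substituting these into the Generalized KYP inequality \eqref{GKYP} of Lemma~\ref{lem-iwasaki} with $\omega_1=\varpi_1$, $\omega_2=\varpi_2$, the balanced interval-type Lyapunov equations \eqref{Interval-FDLE-Balanced} should collapse the $(1,1)$ block and the off-diagonal block to zero (the analogues of $\Delta_1,\Delta_2,\Delta_3=0$), leaving only the bottom-right block, whose negativity is equivalent to $\gamma^2 I$ dominating the matrix $(2\sigma_i(\varpi_1,\varpi_2))^2I+He(\cdots)$ whose maximal eigenvalue is $\eta_i(\varpi_1,\varpi_2)$. Taking $\gamma=\sqrt{\eta_i(\varpi_1,\varpi_2)}$ then gives $\sigma_{max}(\mathcal E_i(\jmath\omega))\le\sqrt{\eta_i(\varpi_1,\varpi_2)}$ on $[\varpi_1,\varpi_2]$, and summing the telescoping chain $G-G_r=\sum_{i=r+1}^n\mathcal E_i$ over $i=r+1,\dots,n$ with the triangle inequality yields \eqref{Interval-EB-Interval}.

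The main obstacle I anticipate is the block-algebra bookkeeping in showing that the $(1,1)$ and off-diagonal Schur blocks vanish: unlike the SF-type case, here the substitution involves the matrix square root $[\varpi_d^2(\jmath\varpi_1I-A)^{-1}(\jmath\varpi_2I-A)^{-1}]^{1/2\star}$ and the $D$-correction term involving $\mathcal N_{ei}$, so the cancellations rely on the correct interaction between these square-root factors $\mathcal M_{ei}$, the projected factors $M_{i-1},M_i$, and the balanced Gramian blocks $\Sigma_{i-1}(\varpi_1,\varpi_2),\Sigma_i(\varpi_1,\varpi_2)$. Verifying that the pre/post multipliers $\mathcal M_{ei}^{-1}(\epsilon\text{-free analogue})$ are exactly the ones that turn the KYP expression into $He(\cdots)$ acting on $\mathcal N_{ei}$, and that the residual bottom-right block is precisely the $\eta_i$ matrix, will require a careful but routine expansion; I would organize it by first checking the scalar ($1\times1$ state) case to fix the algebra and then promoting to the block form.
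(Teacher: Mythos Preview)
Your proposal is correct and follows essentially the same route as the paper: part~1 via the shared state matrix $A_r=Z_rAZ_r^T$ and the classical FIBT stability argument, part~3 via the triangle-inequality decomposition through the extended systems, and part~2 via the one-step dilation plus Generalized KYP Lemma with explicit Lyapunov certificates, reducing the KYP matrix to a zero $(1,1)$-block, zero off-diagonal, and the $\eta_i$ residual in the $(2,2)$-block. The only point worth flagging is that the certificates $\mathscr P_{en},\mathscr Q_{en}$ in the interval case are not quite the direct analogue ``$\Sigma_{ei}$ plus $\sigma_i^2\Sigma_{ei}^{-1}$'' you suggest: the paper's choice takes $\mathscr Q_{en}=\mathcal N_{en}\mathscr B_{en}\mathscr B_{en}^*\mathcal N_{en}^*$ and builds $\mathscr P_{en}$ from four $He(\cdot)$ terms pre/post-multiplied by the resolvents $(\jmath\varpi_1 I-\mathcal A_{en})^{-1}$, $(\jmath\varpi_2 I-\mathcal A_{en})^{-*}$ and by $\mathcal M_{en}^{-1}$, so the $\Sigma$ and $\sigma_n^2\Sigma^{-1}$ contributions appear only after this conjugation---exactly the ``pre/post multiplier'' issue you anticipated as the main obstacle.
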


\begin{proof} 1) It can be easily completed by the similar procedure adopted in the proof of stability preservation for classic FIBT \cite{Zhoubook}. \\

\noindent 2). Similar with the proof of SF-type error bound provided in Theorem 1, only the sketch of the proof for $r=n-1$ case
will be given below. \\
\noindent We abuse notation a little bit for simplification. The error system $\mathcal E_n(\jmath\omega)$ between the original
system model $G(\jmath\omega)$ and the $(n-1)^{th}$ order reduced model $G_{n-1}(\jmath\omega)$ can be represented by:
\begin{equation}
\small
\begin{array}{l}
 \mathcal E_n(\jmath\omega)=G_n(\jmath\omega)-G_{n-1}(\jmath\omega)=G(\jmath\omega)-G_{n-1}(\jmath\omega)=: \pmatset{1}{0.36pt}
  \pmatset{0}{0.2pt}
  \pmatset{2}{8pt}
  \pmatset{3}{8pt}
  \pmatset{4}{8pt}
  \pmatset{5}{4pt}
  \pmatset{6}{4pt}    \begin{pmat}[{|}]
       \mathcal A_{en}   &  \mathcal B_{en} \cr\-
       \mathcal C_{en}   &   \mathcal  D_{en} \cr
      \end{pmat}    =
  \pmatset{1}{0.36pt}
  \pmatset{0}{0.2pt}
  \pmatset{2}{4pt}
  \pmatset{3}{4pt}
  \pmatset{4}{4pt}
  \pmatset{5}{2pt}
  \pmatset{6}{2pt}
 \begin{pmat}[{.|}]
      A_{n-1}   & 0  &   B_{n-1}\cr
      0     & A_n  &     B_{n}  \cr\-
       -C_{n-1} & C_n  &     D_n-D_{n-1}  \cr
      \end{pmat}     \\
 \end{array}\end{equation}

\noindent Based on the error system $\mathcal E_n(\jmath \omega)$,  one can construct a structure-preserving
dilated system $\mathcal E_n(\jmath \omega)$ as follows:
\begin{equation}
\small
  \begin{array}{l}
  \pmatset{1}{0.2pt}
  \pmatset{0}{0.2pt}
  \pmatset{2}{4pt}
  \pmatset{3}{4pt}
  \pmatset{4}{4pt}
  \pmatset{5}{4pt}
  \pmatset{6}{4pt}
\mathscr E_n(\jmath\omega):=\begin{pmat}[{|}]
      \mathscr A_{en}  & \mathscr B_{en}   \cr\-
      \mathscr C_{en}  & \mathscr D_{en}   \cr
      \end{pmat} =  \pmatset{1}{0.36pt}
  \pmatset{0}{0.2pt}
  \pmatset{2}{2pt}
  \pmatset{3}{2pt}
  \pmatset{4}{2pt}
  \pmatset{5}{2pt}
  \pmatset{6}{2pt}\begin{pmat}[{|.}]
       \mathcal A_{en}  &  \mathcal B_{en}                                                & \mathcal B_{dn}                                                         \cr\-
       \mathcal C_{en}  &  \mathcal D_{en}                                                & -\mathcal C_{en}\mathcal N_{en}\mathcal B_{dn}+2 \sigma _n(\varpi_1,\varpi_2) I            \cr
       \mathcal C_{dn}  & -\mathcal C_{dn}\mathcal N_{en}\mathcal B_{en}+2 \sigma _n(\varpi_1,\varpi_2) I   & -\mathcal C_{dn}\mathcal N_{en}\mathcal B_{dn}                                             \cr
      \end{pmat} \\
  \end{array}
\end{equation}

\noindent where $\mathcal B_{en}, \mathcal B_{dn}, \mathcal C_{en}, \mathcal C_{dn}, \mathcal N_{en}$ are defined as (\ref{dilatematrix1})-(\ref{dilatematrix5}). Now, if one choose two symmetrical Lyapunov variables $\mathscr Q_{en}=\mathscr Q_{en}^*\ge 0$ and $\mathscr P_{en}=\mathscr P_{en}^*$ as follows:
\begin{equation}
\label{PQ}
\small
\begin{array}{l}
\mathscr Q_{en} = {\mathcal N_{en}}({\omega _1},{\omega _2}){\mathcal B_{en}}{\mathcal B_{en}}^*{\mathcal N_{en}}^*({\omega _1},{\omega _2}) + {\mathcal N_{en}}({\omega _1},{\omega _2}){\mathcal B_{dn}}{\mathcal B_{dn}}^*{\mathcal N_{en}}^*({\omega _1},{\omega _2}) \\[6mm]
{\mathscr P_{en}} = He\left( {(\jmath{\omega _d}){{(\jmath{\omega _1}I - {\mathcal A_{en}})}^{ - 1}}{\mathcal B_{en}}{\mathcal B_{en}}^*{{(\jmath{\omega _2}I - {\mathcal A_{en}})}^{ - *}}} \right)\\[3mm]
 {\kern 16pt} + He\left( {(\jmath{\omega _d}){{(\jmath{\omega _1}I - {\mathcal A_{en}})}^{ - 1}}{\mathcal B_{dn}}{\mathcal B_{dn}}^*{{(\jmath{\omega _2}I - {\mathcal A_{en}})}^{ - *}}} \right) \\[3mm]
 {\kern 16pt}-\omega_d^2 He\left( {{{(\jmath{\omega _1}I - {\mathcal A_{en}})}^{ - 1}}{\mathcal M_{en}}^{ -1} [Z_{n-1}^T {\kern 4pt}I]^T \Sigma ({\omega _1},{\omega _2})[Z_{n-1}^T {\kern 4pt}I] {\mathcal M_{en}}^{ -*} {{(\jmath{\omega _2}I - {\mathcal A_{en}})}^{ - *}} } \right) \\[3mm]
  {\kern 16pt} - {\sigma _n}^2 \omega_d^2 He\left( {{{(\jmath{\omega _1}I - {\mathcal A_{en}})}^{ - 1}}{\mathcal M_{en}}^{ -1} [-Z_{n-1}^T {\kern 4pt}I]^T{\Sigma ^{ - 1}}({\omega _1},{\omega _2})[-Z_{n-1}^T {\kern 4pt}I]{\mathcal M_{en}}^{ -*} {{(\jmath{\omega _2}I - {\mathcal A_{en}})}^{ - *}} } \right) \\[3mm]
 \end{array}\end{equation}\noindent

\noindent  Combing the interval-type balanced frequency-dependent Lyapunov equation (\ref{Interval-FDLE-Balanced}) and following a similar way in the proof of Theorem 1, one can derive the inequality  \begin{equation}
\footnotesize \begin{array}{l}
 {\kern 11pt}\pmatset{1}{0.1pt}
  \pmatset{0}{0.1pt}
  \pmatset{2}{4pt}
  \pmatset{3}{4pt}
  \pmatset{4}{4pt}
  \pmatset{5}{4pt}
  \pmatset{6}{1pt}
  {\begin{pmat}[{|}]
    -He((\jmath \varpi_1 I-\mathscr A_{en})) \mathscr Q_{en} (\jmath \varpi_2-\mathscr A_{en})^*)+ \mathscr A_{en} \mathscr P_{en} + \mathscr P_{en}\mathscr A_{en}^*  + \mathscr B_{en} {\mathscr B_{en}^*}  & (\jmath \varpi_c I -\mathscr A_{en}) \mathscr C_{en}^*+ \mathscr P_{en}{\mathscr C_{en}^*} + \mathscr B_{en}{\mathscr D_{en}^*} \cr\-
      *  & -\mathscr C_{en}\mathscr Q_{en}\mathscr C_{en}^*+ \mathscr D_{en} {\mathscr D_{en}^*} - {(\sqrt{\eta_n})^2}I  \cr
  \end{pmat}} \\[4mm]
 = \pmatset{1}{0.1pt}
  \pmatset{0}{0.1pt}
  \pmatset{2}{4pt}
  \pmatset{3}{4pt}
  \pmatset{4}{4pt}
  \pmatset{5}{4pt}
  \pmatset{6}{1pt}{\begin{pmat}[{|}]
   {\kern 30pt}  0 {\kern 30pt}& {\kern 30pt}0 {\kern 30pt} \cr\-
   {\kern 30pt}  * {\kern 30pt} & (2{\sigma _n(\varpi_1,\varpi_2)})^2 I +  He \left( -\mathscr C_{ei} \mathcal N_{ei} \mathscr B_{ei} He( [
   0 {\kern 6pt} I ]^T (2{\sigma _n(\varpi_1,\varpi_2)})[  I {\kern 6pt} 0] ) \right)  - \eta_n(\varpi_1,\varpi_2) I  \cr
  \end{pmat}} \le 0 \\
 \end{array}\end{equation}

\noindent  According to the Generalized KYP Lemma, the dilated error system $\mathscr E_n(\jmath \omega)$
 satisfies \\[-8mm] \begin{equation}
\small
  \sigma_{max}(\mathscr E_n(\jmath\omega ))   \le \sqrt {\eta _n(\varpi_1,\varpi_2)}, {\kern 15pt} for {\kern 4pt} all {\kern 4pt} \omega \in  [\varpi_1, \varpi_2]
\end{equation}
\\[-12mm] Therefore the error system satisfying the following inequality\\[-8mm]
\begin{equation}
\small
  \sigma_{max}( \mathcal E_n(\jmath\omega )) \le \sigma_{max}(\mathscr E_n(\jmath\omega ))  \le \sqrt {\eta _n(\varpi_1,\varpi_2)}, {\kern 15pt} for {\kern 4pt} all {\kern 4pt} \omega \in  [\varpi_1, \varpi_2]
\end{equation}
\noindent This completes the proof of interval-type error bound (\ref{Interval-EB-Interval}) for the $r=n-1$ case, the $r=n-2,...,1$ cases can be fulfilled step by step.\\

\noindent 3). Similar with proof of EF-type error bound (\ref{EF-EB-SF}) provided by SF-type FDBT, the proof of EF-type error bound (\ref{EF-EB-Interval}) provided by interval-type FDBT can be completed in the same way.
\end{proof}~\\

\begin{proposition} \emph{the the following statements are true:}\\
{\emph{a).} $\mathop {\lim }\limits_{\varpi_d \to 0} W_c(\varpi_1,\varpi_2) = 0, {\kern 2pt} \mathop {\lim }\limits_{\varpi_d  \to 0}
W_o(\varpi_1,\varpi_2) = 0, {\kern 2pt} \mathop {\lim }\limits_{\varpi_d
\to 0} {\Sigma_\varpi(\varpi_1,\varpi_2)
} = 0 $,\\
\emph{b)} $\mathop {\lim }\limits_{\varpi_d  \to \infty}
W_c(\varpi_1,\varpi_2) = W_c, {\kern 2pt} \mathop {\lim }\limits_{\varpi_d \to \infty} W_o(\varpi_1,\varpi_2) = W_o, {\kern 2pt} \mathop {\lim
}\limits_{\varpi_d \to \infty} {\Sigma_\varpi (\varpi_1,\varpi_2)} = \Sigma$.} \\
\emph{c)} $\mathop {\lim }\limits_{\varpi_d  \to 0}
\eta_i=0,  {\kern 4pt}   i=1,...,n$ 
\end{proposition}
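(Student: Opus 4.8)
I would treat (a) and (b) as elementary continuity/scaling statements about the Lyapunov equations (\ref{interval-FDLE}) that define the interval-type Gramians, and then reduce (c) to (a) by carefully tracking powers of $\varpi_d$ through the dilated data in (\ref{dilatematrix1})--(\ref{dilatematrix5}).

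\textbf{Part (a).} Since $\varpi_d>0$, the positive principal square root in Definition~\ref{Int-FDES} factors as $[\varpi_d^2(\jmath\varpi_1 I-A)^{-1}(\jmath\varpi_2 I-A)^{-1}]^{\frac12\star}=\varpi_d\,[(\jmath\varpi_1 I-A)^{-1}(\jmath\varpi_2 I-A)^{-1}]^{\frac12\star}$, so that $B(\varpi_1,\varpi_2)=\varpi_d\,[(\jmath\varpi_1 I-A)^{-1}(\jmath\varpi_2 I-A)^{-1}]^{\frac12\star}B$ and likewise $C(\varpi_1,\varpi_2)=\varpi_d\,C\,[(\jmath\varpi_1 I-A)^{-1}(\jmath\varpi_2 I-A)^{-1}]^{\frac12\star}$. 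As $\varpi_d\to0$ we have $\varpi_1,\varpi_2\to\varpi_c$; since $A$ is Hurwitz, $\jmath\varpi_c$ is not an eigenvalue of $A$, hence the bracketed factor converges (to $(\jmath\varpi_c I-A)^{-1}$) and therefore $B(\varpi_1,\varpi_2)B^*(\varpi_1,\varpi_2)=O(\varpi_d^2)\to0$ and $C^*(\varpi_1,\varpi_2)C(\varpi_1,\varpi_2)\to0$. Because $A$ is Hurwitz the Lyapunov operator $X\mapsto AX+XA^*$ is boundedly invertible, so the unique solutions of (\ref{interval-FDLE}) satisfy $W_c(\varpi_1,\varpi_2)\to0$ and $W_o(\varpi_1,\varpi_2)\to0$; and since the interval-type Hankel singular values obey $\sigma_i(\varpi_1,\varpi_2)=\sqrt{\lambda_i\!\left(W_c(\varpi_1,\varpi_2)W_o(\varpi_1,\varpi_2)\right)}$, we get $\Sigma(\varpi_1,\varpi_2)\to0$.

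\textbf{Part (b).} Here I would show $[\varpi_d^2(\jmath\varpi_1 I-A)^{-1}(\jmath\varpi_2 I-A)^{-1}]^{\frac12\star}\to I$ as the interval grows to $(-\infty,+\infty)$, i.e. $\varpi_c$ fixed and $\varpi_d\to\infty$. Expanding $(\jmath\varpi_j I-A)^{-1}=(\jmath\varpi_j)^{-1}\bigl(I+O(\varpi_j^{-1})\bigr)$ and using $\varpi_1\varpi_2=\varpi_c^2-\varpi_d^2$ gives $\varpi_d^2(\jmath\varpi_1 I-A)^{-1}(\jmath\varpi_2 I-A)^{-1}=\tfrac{\varpi_d^2}{\varpi_d^2-\varpi_c^2}\bigl(I+o(1)\bigr)\to I$, so $B(\varpi_1,\varpi_2)\to B$ and $C(\varpi_1,\varpi_2)\to C$. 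Continuity of the solution of a Lyapunov equation with fixed Hurwitz $A$ in the right-hand side then yields $W_c(\varpi_1,\varpi_2)\to W_c$ and $W_o(\varpi_1,\varpi_2)\to W_o$, and continuity of the singular values gives $\Sigma(\varpi_1,\varpi_2)\to\Sigma$.

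\textbf{Part (c).} I would reduce this to (a). By (a), $\sigma_i(\varpi_1,\varpi_2)\to0$, so both the term $(2\sigma_i(\varpi_1,\varpi_2))^2 I$ and the scalar factor $2\sigma_i(\varpi_1,\varpi_2)$ multiplying the second summand in $\eta_i$ vanish in the limit; it therefore suffices to show that $\mathscr B_{ei}$, $\mathscr C_{ei}$ and $\mathcal N_{ei}$ stay bounded as $\varpi_d\to0$. For $\mathcal N_{ei}$ this is immediate: as $\varpi_1,\varpi_2\to\varpi_c$ each block tends to $(\jmath\varpi_c I-A_{i})^{-1}$, which exists because the compressed matrices $A_i=Z_iAZ_i^T$ inherit Hurwitz stability from $A$ by Theorem~\ref{the-Int-FDBT}(1). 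For $\mathscr B_{ei}$ and $\mathscr C_{ei}$ the point is a cancellation of $\varpi_d$-powers: $\mathcal M_{ei}^{-1}=\varpi_d^{-1}\,\mathrm{diag}\{[(\jmath\varpi_1 I-A_{i-1})^{-1}(\jmath\varpi_2 I-A_{i-1})^{-1}]^{-\frac12},[(\jmath\varpi_1 I-A_{i})^{-1}(\jmath\varpi_2 I-A_{i})^{-1}]^{-\frac12}\}$ grows like $\varpi_d^{-1}$, but in each block of (\ref{dilatematrix1})--(\ref{dilatematrix2}) it is multiplied by either $B(\varpi_1,\varpi_2)$ or $C(\varpi_1,\varpi_2)$, each of which is $O(\varpi_d)$ by the factorization used in (a); moreover the factor $\sigma_i(\varpi_1,\varpi_2)\Sigma_{ei}^{-1}(\varpi_1,\varpi_2)$ in the "dilated'' blocks has all entries of the form $\sigma_i/\sigma_j$ with $j\le i$, hence bounded by $1$ since the Hankel singular values are ordered. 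Consequently $\mathscr B_{ei}$ and $\mathscr C_{ei}$ converge to finite limits, $-\mathscr C_{ei}\mathcal N_{ei}\mathscr B_{ei}$ is uniformly bounded, and $\eta_i(\varpi_1,\varpi_2)=\sigma_{max}\bigl(O(\sigma_i(\varpi_1,\varpi_2)^2)+O(\sigma_i(\varpi_1,\varpi_2))\bigr)\to0$.

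\textbf{Main obstacle.} The delicate step is (c): verifying rigorously that the $\varpi_d^{-1}$ blow-up of $\mathcal M_{ei}^{-1}$ is exactly cancelled by the $\varpi_d$-decay of $B(\varpi_1,\varpi_2)$ and $C(\varpi_1,\varpi_2)$ — including the bookkeeping for the reduced-order blocks $A_i$ and the correct handling of the (principal versus non-principal) matrix square roots — and then bounding the $\sigma_i\Sigma_{ei}^{-1}$ factors uniformly. Parts (a) and (b) are then routine consequences of well-posedness and continuity of the Lyapunov equations.
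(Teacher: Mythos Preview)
Your proposal is correct and follows essentially the same approach as the paper: for (a) and (b) you both take limits of $B(\varpi_1,\varpi_2)$ and $C(\varpi_1,\varpi_2)$ and invoke continuity/well-posedness of the Lyapunov equations, and for (c) you both argue that $\mathscr C_{ei}\mathcal N_{ei}\mathscr B_{ei}$ stays bounded while the $\sigma_i(\varpi_1,\varpi_2)$ factors vanish. Your treatment of (c) is in fact more explicit than the paper's, which simply asserts that $\mathscr B_{ei}$ and $\mathscr C_{ei}$ are norm-bounded without spelling out the $\varpi_d$-cancellation between $\mathcal M_{ei}^{-1}$ and $B(\varpi_1,\varpi_2),C(\varpi_1,\varpi_2)$ that you correctly flag as the delicate point.
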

\begin{proof} a). It can be easily observed that
\begin{equation}
\small
\begin{array}{l} \mathop {\lim }\limits_{\varpi_d \to
0} {A(\varpi_1,\varpi_2)} =\mathop {\lim }\limits_{\varpi_d \to
0 } {A}=
A,\\
\mathop {\lim }\limits_{\varpi_d \to 0} {B(\varpi_1,\varpi_2)}
=\mathop {\lim }\limits_{\varpi_d  \to 0} {[\varpi_d^2(\jmath\varpi _1I - A)^{ - 1}(\jmath\varpi _2I - A)^{ - 1}]^{\frac{1}{2}\star} B}=0,\\
\mathop {\lim }\limits_{\varpi_d \to 0} {C(\varpi_1,\varpi_2)}
=\mathop {\lim }\limits_{\varpi_d \to 0 } { C [\varpi_d^2(\jmath\varpi _1I - A)^{ - 1}(\jmath\varpi _2I - A)^{ - 1}]^{\frac{1}{2}\star} } =0.\\
 \end{array}\end{equation}
\noindent From the interval-type frequency-dependent Lyapunov equation (\ref{interval-FDLE}), we know that
 \begin{equation}
\small
\label{lyapunov_equation}\begin{array}{l}
A\mathop {\lim }\limits_{\varpi_d \to 0} W_c(\varpi_1,\varpi_2) +\mathop {\lim }\limits_{\varpi_d \to 0} W_c(\varpi_1,\varpi_2) A^*=0\\
A^*\mathop {\lim }\limits_{\varpi_d \to 0} W_o(\varpi_1,\varpi_2) +\mathop {\lim }\limits_{\varpi_d \to 0} W_o(\varpi_1,\varpi_2) A=0. \\
 \end{array}\end{equation}
\noindent which means $\mathop {\lim }\limits_{\varpi_d \to 0} W_c(\varpi_1,\varpi_2) = 0$ and $\mathop {\lim }\limits_{\varpi_d  \to 0}
W_o(\varpi_1,\varpi_2) = 0$.

\noindent b). Similar with the above proof, we have
\begin{equation}
\small
\begin{array}{l} \mathop {\lim }\limits_{\varpi_d \to
\infty} {A(\varpi_1,\varpi_2)} =\mathop {\lim }\limits_{\varpi_d \to
\infty } {A}=
A,\\
\mathop {\lim }\limits_{\varpi_d \to \infty} {B(\varpi_1,\varpi_2)}
=\mathop {\lim }\limits_{\varpi_d  \to \infty} {[\varpi_d^2(\jmath\varpi _1I - A)^{ - 1}(\jmath\varpi _2I - A)^{ - 1}]^{\frac{1}{2}\star} B}=B,\\
\mathop {\lim }\limits_{\varpi_d \to \infty} {C(\varpi_1,\varpi_2)}
=\mathop {\lim }\limits_{\varpi_d \to \infty} { C [\varpi_d^2(\jmath\varpi _1I - A)^{ - 1}(\jmath\varpi _2I - A)^{ - 1}]^{\frac{1}{2}\star} } =C.\\
 \end{array}\end{equation}

\noindent and
 \begin{equation}
\small
\label{lyapunov_equation}\begin{array}{l}
A\mathop {\lim }\limits_{\varpi_d \to \infty} W_c(\varpi_1,\varpi_2) +\mathop {\lim }\limits_{\varpi_d \to \infty} W_c(\varpi_1,\varpi_2) A^*+BB^*=0\\
A^*\mathop {\lim }\limits_{\varpi_d \to \infty} W_o(\varpi_1,\varpi_2) +\mathop {\lim }\limits_{\varpi_d \to \infty} W_o(\varpi_1,\varpi_2) A+C^*C=0. \\
 \end{array}\end{equation}
\noindent Then $\mathop {\lim }\limits_{\varpi_d  \to \infty}
W_c(\varpi_1,\varpi_2) = W_c$ and $\mathop {\lim }\limits_{\varpi_d \to \infty} W_o(\varpi_1,\varpi_2) = W_o$ can be conclude.

\noindent c). Noticing that $\sigma_i(\varpi_1,\varpi_2)$ is the minimum of the diagonal components of $\Sigma_{ei}(\varpi_1,\varpi_2)$, then we have
 \begin{equation}
\small
\label{lyapunov_equation}\begin{array}{l}
\mathop {\lim }\limits_{\varpi_d \to 0} \sigma_i(\varpi_1,\varpi_2) \Sigma_{ei}(\varpi_1,\varpi_2) \leq I\\
 \end{array}\end{equation}
furthermore, one can conclude that there exists a scalar $\mu<\infty$ such that the following inequality

  \begin{equation}
\small
\label{lyapunov_equation}\begin{array}{l}
\mathop {\lim }\limits_{\varpi_d \to 0}  {\mathscr C_{ei} \mathcal N_{ei} \mathscr B_{ei}}  \leq \mu I\\
 \end{array}\end{equation}
\noindent holds since the convergence of matrices $\mathscr C_{ei}$,$N_{ei}$ and $\mathscr B_{ei}$ in cases that $\varpi_d \to 0$ are norm bounded.

  \begin{equation}
\small
\label{lyapunov_equation}\begin{array}{l}
{\kern 8pt} \mathop {\lim }\limits_{\varpi_d \to 0} \eta_i(\varpi_1,\varpi_2)\\
= \mathop {\lim }\limits_{\varpi_d \to 0} \sigma_i^2(\varpi_1,\varpi_2) I+  \mathop {\lim }\limits_{\varpi_d \to 0} He \left(-\mathscr C_{ei} \mathcal N_{ei} \mathscr B_{ei} He( [
   0 {\kern 6pt} I  ]^T (2{\sigma _i(\varpi_1,\varpi_2)}) [ I  {\kern 6pt} 0 ] \right)  \\
= \mathop {\lim }\limits_{\varpi_d \to 0} \sigma_i^2(\varpi_1,\varpi_2) I+  He \left(-\mathop {\lim }\limits_{\varpi_d \to 0} \mathscr C_{ei} \mathcal N_{ei} \mathscr B_{ei}  He( [
   0 {\kern 6pt} I  ]^T (2 \mathop {\lim }\limits_{\varpi_d \to 0} {\sigma _i(\varpi_1,\varpi_2)}) [ I  {\kern 6pt} 0 ] \right)  \\
   =0
 \end{array}\end{equation}

\noindent Thus the proof is completed.\end{proof}

\begin{proposition} The following equation 
\begin{equation}
\label{pro5}
T^{-1}[\varpi_d^2(\jmath\varpi _1I - A)^{ - 1}(\jmath\varpi _2I - A)^{ - 1}]^{\frac{1}{2}\star}T=[\varpi_d^2(\jmath\varpi _1I - T^{-1}AT)^{ - 1}(\jmath\varpi _2I - T^{-1}AT)^{ - 1}]^{\frac{1}{2}\star
}\end{equation} holds for arbitrarily given invertible matrix $T\in \mathbb C^{n\times n}$ 
\end{proposition}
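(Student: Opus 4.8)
The plan is to reduce the claimed identity to a statement about a holomorphic function of the matrix $A$ and its behaviour under similarity transformations. First I would rewrite the matrix whose principal square root appears, namely
\[
R(A) := \varpi_d^2(\jmath\varpi_1 I - A)^{-1}(\jmath\varpi_2 I - A)^{-1},
\]
and observe that, since $\jmath\varpi_1 I - A$ and $\jmath\varpi_2 I - A$ commute (they are polynomials in $A$), $R(A)$ is itself a rational — hence holomorphic on a neighbourhood of $\sigma(A)$ — function of $A$, i.e. $R(A) = f(A)$ for the scalar function $f(z) = \varpi_d^2/((\jmath\varpi_1 - z)(\jmath\varpi_2 - z))$, provided neither $\jmath\varpi_1$ nor $\jmath\varpi_2$ is an eigenvalue of $A$ (which is the standing assumption in Definition~\ref{Int-FDES}). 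The first routine step is then the classical similarity-invariance of the functional calculus: $T^{-1} f(A) T = f(T^{-1}AT)$ for any holomorphic $f$ and any invertible $T$, which follows immediately from $T^{-1}(\lambda I - A)^{-1}T = (\lambda I - T^{-1}AT)^{-1}$ and the Cauchy/Dunford integral representation $f(A) = \frac{1}{2\pi\jmath}\oint_\Gamma f(\lambda)(\lambda I - A)^{-1}\,d\lambda$, where $\Gamma$ encircles $\sigma(A) = \sigma(T^{-1}AT)$. This already gives $T^{-1}R(A)T = R(T^{-1}AT)$.

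The second and more delicate step is to pass the principal square root through the similarity. Here I would invoke the defining property of $[M]^{\frac12\star}$ given in the Notations: it is the unique square root of $M$ all of whose eigenvalues have positive real part. Set $S := [R(A)]^{\frac12\star}$, so $S^2 = R(A)$ and $\mathrm{Re}\,\lambda > 0$ for every $\lambda \in \sigma(S)$. Conjugating, $(T^{-1}ST)^2 = T^{-1}S^2T = T^{-1}R(A)T = R(T^{-1}AT)$, so $T^{-1}ST$ is \emph{a} square root of $R(T^{-1}AT)$; moreover $\sigma(T^{-1}ST) = \sigma(S)$, so all its eigenvalues still have positive real part. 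By the uniqueness clause in the definition of the positive principal square root, $T^{-1}ST = [R(T^{-1}AT)]^{\frac12\star}$, which is exactly the asserted identity once $R$ is spelled out on both sides.

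The main obstacle is the well-definedness and uniqueness of the positive principal square root itself: one must know that a matrix $M$ of the form $R(A)$ actually admits a square root with spectrum in the open right half-plane and that it is unique. Uniqueness holds whenever $M$ has no eigenvalues on the closed negative real axis (the standard condition for the principal square root), and existence then follows from the holomorphic functional calculus applied to the principal branch of $\sqrt{\cdot}$; I would remark that $R(A) = f(A)$ with $f$ as above maps $\sigma(A)$ into the resolvent-admissible region, and under the genericity/invertibility hypotheses on $A$ already imposed in this section, $\sigma(R(A))$ avoids $(-\infty,0]$, so both existence and uniqueness are in force. Granting that, the argument above is complete; the two substantive ingredients — similarity-invariance of the resolvent-based functional calculus and uniqueness of the principal square root — are both classical, and the rest is bookkeeping in substituting $A \mapsto T^{-1}AT$.
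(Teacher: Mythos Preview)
Your proof is correct and follows essentially the same strategy as the paper: show that $(T^{-1}ST)^2$ (with $S=[R(A)]^{\frac12\star}$) equals $R(T^{-1}AT)$, and then invoke uniqueness of the positive principal square root. Your uniqueness step is in fact cleaner than the paper's---the paper argues via an eigen-decomposition $UVU^{-1}$ (tacitly assuming diagonalizability) and then asserts both sides equal $UV^{\frac12\star}U^{-1}$, whereas you directly use the spectral characterization $\sigma(T^{-1}ST)=\sigma(S)\subset\{\mathrm{Re}\,\lambda>0\}$, which dispenses with diagonalizability and makes the appeal to uniqueness explicit.
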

\begin{proof}
Lets consider the square of matrices of the left side and right side in (\ref{pro5}), we have
\[\begin{array}{l}
\left(T^{-1}[\varpi_d^2(\jmath\varpi _1I - A)^{ - 1}(\jmath\varpi _2I - A)^{ - 1}]^{\frac{1}{2}\star}T\right)^2\\
= T^{-1}[\varpi_d^2(\jmath\varpi _1I - A)^{ - 1}(\jmath\varpi _2I - A)^{ - 1}]^{\frac{1}{2}\star}[\varpi_d^2(\jmath\varpi _1I - A)^{ - 1}(\jmath\varpi _2I - A)^{ - 1}]^{\frac{1}{2}\star}T\\
=T^{-1}[\varpi_d^2(\jmath\varpi _1I - A)^{ - 1}(\jmath\varpi _2I - A)^{ - 1}]T\\
=[\varpi_d^2(\jmath\varpi _1I - T^{-1}AT)^{ - 1}(\jmath\varpi _2I - T^{-1}AT)^{ - 1}]\\
=\left([\varpi_d^2(\jmath\varpi _1I - T^{-1}AT)^{ - 1}(\jmath\varpi _2I - T^{-1}AT)^{ - 1}]^{\frac{1}{2}\star}\right)^2\\
 \end{array}\]

\noindent The above equation means that there exist matrices $U,V$ such that
\[\begin{array}{l}
\left(T^{-1}[\varpi_d^2(\jmath\varpi _1I - A)^{ - 1}(\jmath\varpi _2I - A)^{ - 1}]^{\frac{1}{2}\star}T\right)^2\\
=\left([\varpi_d^2(\jmath\varpi _1I - \hat A)^{ - 1}(\jmath\varpi _2I - \hat A)^{ - 1}]^{\frac{1}{2}\star}\right)^2\\
=U V U^{-1}\\
 \end{array}\]
\noindent where $U$ is the matrix whose columns are eigenvectors of $\left(T^{-1}[\varpi_d^2(\jmath\varpi _1I - A)^{ - 1}(\jmath\varpi _2I - A)^{ - 1}]^{\frac{1}{2}\star}T\right)^2$ and $\left([\varpi_d^2(\jmath\varpi _1I - \hat A)^{ - 1}(\jmath\varpi _2I - \hat A)^{ - 1}]^{\frac{1}{2}\star}\right)^2$ and $V$ is the diagonal matrix whose diagonal elements are the corresponding eigenvalues. Furthermore, one get
\[\begin{array}{l}
T^{-1}[\varpi_d^2(\jmath\varpi _1I - A)^{ - 1}(\jmath\varpi _2I - A)^{ - 1}]^{\frac{1}{2}\star}T=[\varpi_d^2(\jmath\varpi _1I - \hat A)^{ - 1}(\jmath\varpi _2I - \hat A)^{ - 1}]^{\frac{1}{2}\star}=U V^{\frac{1}{2}\star} U^{-1}\\
 \end{array}\]
\noindent
 This completes the proof. \end{proof}~\\

 \noindent With the above preparations, the corresponding interval-type FDBT algorithm (Algorithm 2) can be presented as follows.\\

\begin{algorithm}
\caption{Interval-type FDBT}
\begin{algorithmic}
\REQUIRE{Full-order model $(A,B,C,D)$, Frequency interval $(\varpi_1, \varpi_2)$, order of reduced  model $(r)$.}\\[4mm]

\textbf{Step 1.} Solve the interval-type frequency-dependent Lyapunov equations (\ref{interval-FDLE})\\[4mm]

\textbf{Step 2.} Get the frequency-dependent realization of the given system by coordinate transformation:\\
\begin{small}
\begin{equation}
\begin{array}{l}
 \pmatset{1}{0.36pt}
  \pmatset{0}{0.2pt}
  \pmatset{2}{4pt}
  \pmatset{3}{4pt}
  \pmatset{4}{4pt}
  \pmatset{5}{4pt}
  \pmatset{6}{4pt}    \begin{pmat}[{|}]
       A_b   &   B_b \cr\-
       C_b   &   D_b \cr
      \end{pmat}    =
  \pmatset{1}{0.36pt}
  \pmatset{0}{0.2pt}
  \pmatset{2}{4pt}
  \pmatset{3}{4pt}
  \pmatset{4}{4pt}
  \pmatset{5}{4pt}
  \pmatset{6}{4pt}
 \begin{pmat}[{|}]
      T^{-1}(\varpi_1,\varpi_2) A T(\varpi_1,\varpi_2)   & T^{-1}(\varpi_1,\varpi_2)B  \cr\-
      C T(\varpi_1,\varpi_2)    & D+C_b[(\jmath\varpi _cI - A_b)(\jmath\varpi _1I - A_b)^{ - 1}(\jmath\varpi _2I - A_b)^{ - 1}]B_b \cr
      \end{pmat}     \\
 \end{array},
 \end{equation}
\end{small}

where $T(\varpi_1,\varpi_2)$ is a matrix that simultaneously diagonalize the matrices $W_c(\varpi_1,\varpi_2)$ and $W_o(\varpi_1,\varpi_2)$, i.e.,
\begin{small}\[ T^{-1}(\varpi_1,\varpi_2)W_c(\varpi_1,\varpi_2)T(\varpi_1,\varpi_2)=T^{*}(\varpi_1,\varpi_2)W_o(\varpi_1,\varpi_2)T^{-*}(\varpi_1,\varpi_2)=\Sigma(\varpi_1,\varpi_2),\]
\end{small}
\textbf{Step 3.} Compute the reduced-order model as:
\begin{equation}
\small
\label{RedMod-Interval}
\begin{array}{l}
 A_r =  Z_r A_b Z_r^T,\\[2mm]
 B_r = [\varpi_d^2(\jmath\varpi _1I - A_r)^{ - 1}(\jmath\varpi _2I - A_r)^{ - 1}]^{-\frac{1}{2}\star} Z_r [\varpi_d^2(\jmath\varpi _1I - A_b)^{ - 1}(\jmath\varpi _2I - A_b)^{ - 1}]^{\frac{1}{2}\star} B_b, \\[2mm]
 C_r = C_b[\varpi_d^2(\jmath\varpi _1I - A_b)^{ - 1}(\jmath\varpi _2I - A_b)^{ - 1}]^{\frac{1}{2}\star} Z_r^T [\varpi_d^2(\jmath\varpi _1I - A_r)^{ - 1}(\jmath\varpi _2I - A_r)^{ - 1}]^{-\frac{1}{2}\star},\\[2mm]
 {D_r} = D+C_b[(\jmath\varpi _cI - A_b)(\jmath\varpi _1I - A_b)^{ - 1}(\jmath\varpi _2I - A_b)^{ - 1}]B_b\\
 {\kern 36pt}-C_r[(\jmath\varpi _cI - A_r)(\jmath\varpi _1I - A_r)^{ - 1}(\jmath\varpi _2I - A_r)^{ - 1}]B_r, \\
 \end{array}
 \end{equation}

\ENSURE Reduced-order model $(A_r,B_r,C_r,D_r)$
\end{algorithmic}
\end{algorithm}

\begin{remark} Compared with other balancing-related approaches, the most distinctive feature of the proposed interval-type FDBT method is that it gives an interval-type error bound (\ref{Interval-EB-Interval}). To the best of our knowledge, it is the first time to provide such an interval-type error bound using the interval-type index (\ref{FF_index}) in the model order reduction research areas. In particular, as revealed by Proposition 4, the interval-type error bound (\ref{Interval-EB-Interval}) always tends to be zero while the interval size tends to zero. This property means that the interval-type FDBT generally will gives rise to good in-band approximation performance while provides better in-band error bound simultaneously as long as the size of frequency interval is small enough. Although the interval-type error bound may be increasing quickly with respect to the size of frequency interval. The interval-type error bound and its property are still appealing from a theoretical viewpoint.\end{remark}

\begin{remark} Again, the interval-type FDBT is also presented in a general form, i.e. the system matrices are allowed to be complex or real and the frequency interval might be symmetrical or asymmetrical. It can be easily verified that the interval-type FDBT will generate real reduced models for real full models if the given frequency interval is symmetrical (i.e $\varpi_1=-\varpi_2$). For applications with real system parameter restriction in asymmetrical frequency interval cases ($\omega \in [\varpi_1,\varpi_2]$), the interval-type FDBT can also be applied in a conservative way by modifying the frequency as $\omega \in [-\varpi_{max},\varpi_{max}]$ with $\varpi_{max}=max\{\left|\varpi_1\right|, \left|\varpi_2\right| \}$. 
\end{remark}

\section{Examples}

\begin{example} Lets consider a LTI system (\ref{originalsystem}) with the following parameter matrices:\begin{equation}
\small
\begin{array}{l}
 \pmatset{1}{0.36pt}
  \pmatset{0}{0.2pt}
  \pmatset{2}{4pt}
  \pmatset{3}{4pt}
  \pmatset{4}{10pt}
  \pmatset{5}{10pt}
  \pmatset{6}{10pt}    \begin{pmat}[{|}]
       A    &   B \cr\-
       C    &   D   \cr
      \end{pmat}    =
  \pmatset{1}{0.36pt}
  \pmatset{0}{0.2pt}
  \pmatset{2}{2pt}
  \pmatset{3}{2pt}
  \pmatset{4}{2pt}
  \pmatset{5}{2pt}
  \pmatset{6}{2pt}
 \begin{pmat}[{.....|}]
    0.2128  &  0.7749  &  0.1945  & -0.2864  &  0.0501 &  -0.0464  & 0.9673 \cr
   -0.6613  & -2.6801  & -0.8468  & -0.5733  & -0.7945 &   0.9653  & -1.4467 \cr
    0.2423  & -0.8043  & -0.7669  & -0.5423  & -0.9032 &   0.1441  & -1.2514\cr
   -0.1508  &  0.5229  &  0.6927  & -0.0704  &  0.8778 &  -0.5350  & -0.4141\cr
    0.3542  &  0.7882  &  0.3681  & -0.2077  & -0.1705 &  -0.7660  & -0.6560\cr
   -0.6424  & -0.5045  & -0.0252  &  0.6453  &  0.9838 &  -0.9392  & -0.1651\cr\-
  -1.5883   & -1.3181  &  0.5656  &  1.1507  & -0.5106 &  -0.7736  & 3.9764\cr
      \end{pmat}
 \end{array}\end{equation}

\noindent Here we assume that the frequency of input signal belongs to an uncertain interval around $\varpi=0$. The task is to build reduced model of order 3 approximating the frequency domain dynamic behaviors of the original model well in the neighborhood of $\varpi=0$. Among the existing balancing-related methods, the \emph{(generalized) SPA} is the most suitable one for coping with this kind of model reduction problems. At the same time, our proposed \emph{SF-type FDBT} method can also be applied for this kind of problems. The sigma plots of error systems generated by \emph{generalized SPA} and \emph{SF-type FDBT} are depicted in Fig.1 and Fig.2, respectively. As Fig.1 and Fig.2 shown, both of them could gives rise to small approximation error around $\varpi=0$. Moreover, one can make a tradeoff between the local approximation performance and global approximation performance by adjusting the the user-defined parameter ($\rho$ for generalized SPA and $\epsilon$ for SF-type FDBT). In this example the generalized SPA and the SF-type FDBT performs very similar with each other, however, huge variety on their performance  may occurred in some cases (see example 3 in the below, in which only the SF-type FDBT is effective).

   \begin{figure}[ht!]
    \centering
      \includegraphics[scale=0.45]{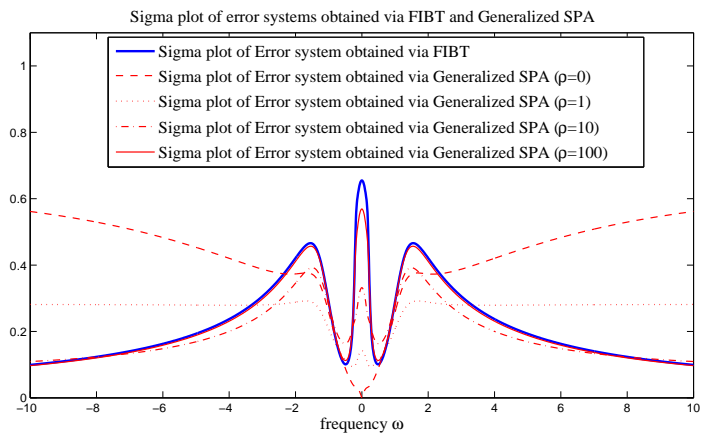}
      \vspace{-0.0cm}
    \caption{Sigma plot of error models generated via Generalized SPA and FIBT}
       \end{figure}
   \begin{figure}[ht!]
    \centering
      \includegraphics[scale=0.45]{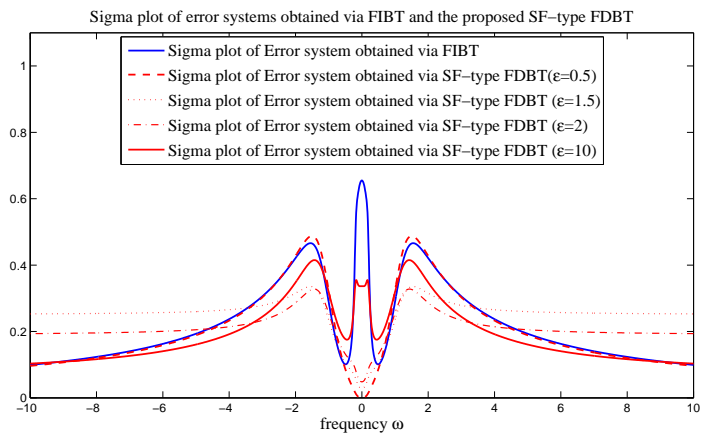}
      \vspace{-0.0cm}
    \caption{Sigma plot of error models generated via SF-type FDBT and FIBT}
      \end{figure}
   \begin{figure}[ht!]
    \centering
      \includegraphics[scale=0.45]{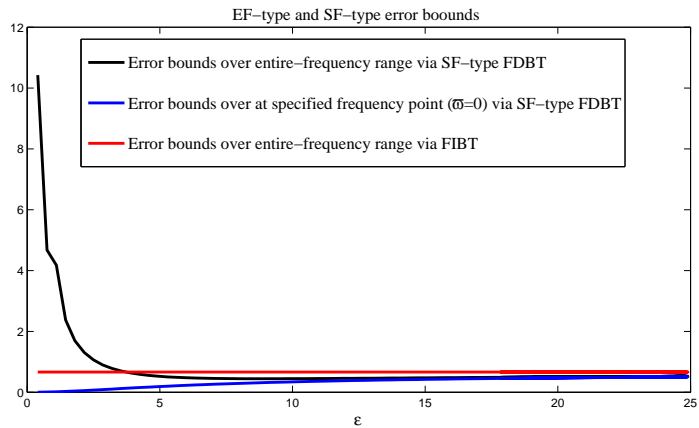}
      \vspace{-0.0cm}
    \caption{SF-type error bound and EF-type error bound with respect to the parameter $\epsilon$}
       \end{figure}

\noindent Besides, the corresponding SF-type error bound and EF-type error bound with respect to different $\epsilon$ provided by SF-type FDBT are plotted in Fig 3. According to the error bounds, we know that the local and global approximation performance could be well balanced by picking up the value of parameter $\epsilon$ larger than 3 and smaller than 5. In this way the trial-and-error procedure to find an appropriate $\epsilon$ can be shorten or avoided. Furthermore, if the parameter $\epsilon$ satisfy $25>\epsilon>4$, the EF-type error bound of SF-type FDBT will even be smaller that the EF-type error bound of FIBT.
\end{example}~\\

\begin{example} Lets consider a LTI system (\ref{originalsystem}) with the following parameter matrices:
\begin{equation}
\small
\begin{array}{l}
 \pmatset{1}{0.36pt}
  \pmatset{0}{0.2pt}
  \pmatset{2}{4pt}
  \pmatset{3}{4pt}
  \pmatset{4}{10pt}
  \pmatset{5}{10pt}
  \pmatset{6}{10pt}    \begin{pmat}[{|}]
       A    &   B \cr\-
       C    &   D   \cr
      \end{pmat}    =
  \pmatset{1}{0.36pt}
  \pmatset{0}{0.2pt}
  \pmatset{2}{2pt}
  \pmatset{3}{2pt}
  \pmatset{4}{2pt}
  \pmatset{5}{2pt}
  \pmatset{6}{2pt}
 \begin{pmat}[{...|}]
   -0.62  &  0.44 &  -0.03 &  -0.00 & -0.31 \cr
    0.44  & -3.64 &   0.59 &   0.02 &  0.47 \cr
    0.03  & -0.59 &  -6.80 &  -0.46 &  0.12 \cr
   -0.00  &  0.02 &   0.46 &  -5.64 & -0.00 \cr\-
   -0.31  &  0.47 &  -0.12 &  -0.00 &  0.00 \cr
      \end{pmat}
 \end{array}\end{equation}
\noindent  The frequency range of input signals is assumed to be pre-known, and we consider the following two different cases: (1) Case 1: $\omega \in [-0.4, +0.4]$; (2) Case 2: $\omega \in [-0.8, +0.8]$. \\
\noindent Among the existing balancing-related methods, FGBT \cite{BT2_Survey}  is the exact one developed for solving such interval-type finite-frequency model reduction problems. Our proposed interval-type FDBT is also aimed to solve this kind of problems. We will show the differences between them by this example. The sigma plot of error models and the corresponding error bound are given in the Fig. 4-Fig. 5, by which the most striking difference on the type of error bounds can be illustrated. The FGBT provides error bound over entire-frequency range, in contrast, the interval-type only provides error bound over the pre-specified frequency interval. Since it is assumed that the operating frequencies belong to the given intervals, the interval-type error bounds are adequate for approximation performance estimation. Compared with the standard FIBT, both the FGBT and the interval-type FDBT are effective in improving the approximation performance over specified frequency interval. At the same time, the interval-type FDBT has the advantage that it gives rise to better approximation performance and smaller error bound simultaneously.

   \begin{figure}[ht!]
    \centering
      \includegraphics[scale=0.6]{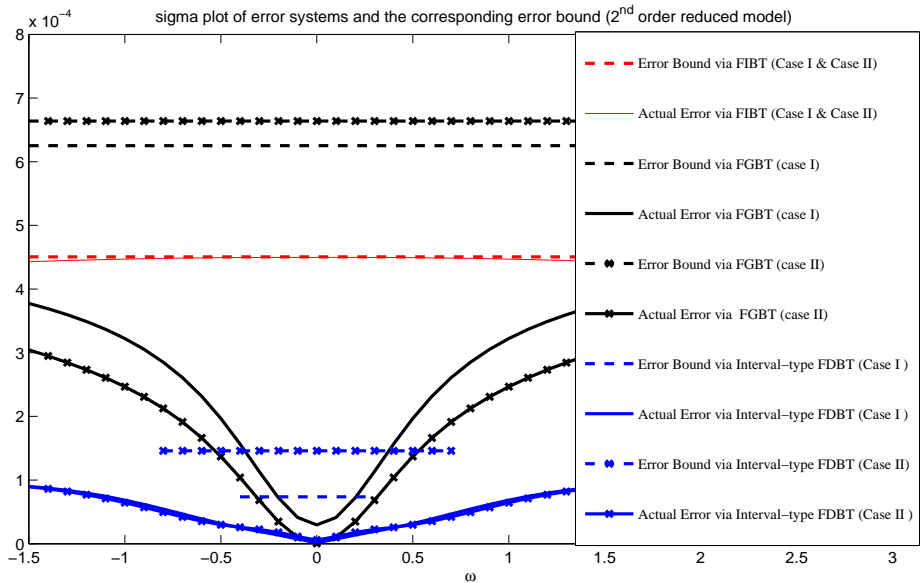}
         \vspace{-0.0cm}
          \caption{Sigma plot of error models and the corresponding error bounds ($2^{nd}$ order reduced model)}
       \end{figure}\vspace{-0.0cm}

   \begin{figure}[ht!]
    \centering
      \includegraphics[scale=0.6]{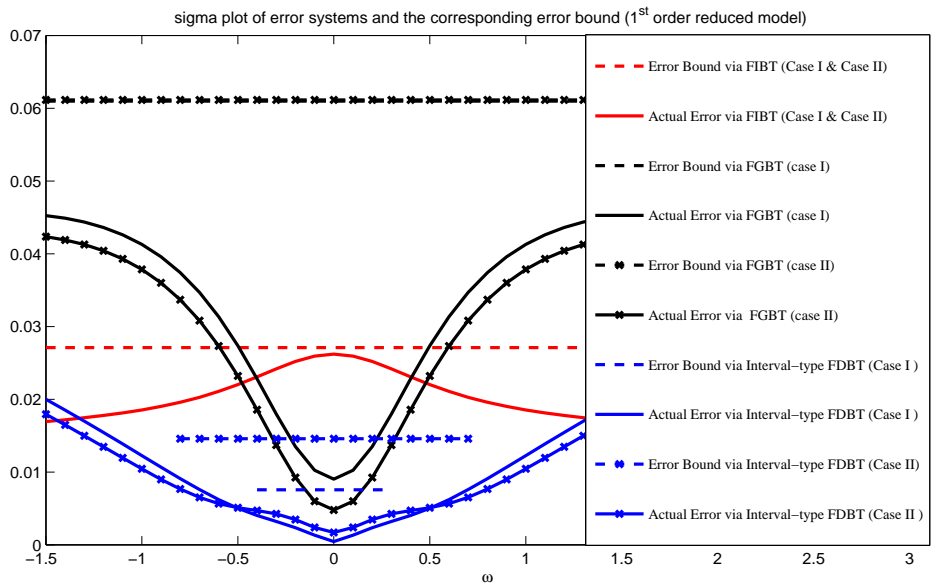}
       \vspace{-0.0cm}
       \caption{Sigma plot of error models and the corresponding error bounds ($1^{st}$ order reduced model)}
       \end{figure}

\noindent As depicted by Fig. 4 and Fig.5, the interval-type error bound provided by interval-type FDBT for Case II is larger than the interval-type error bound for Case I. To further show the property of interval-type error bound, we plot the curves of the two interval-size ($\varpi_l$) dependent indices in Fig.6 and Fig.7. It is shown the interval-type error bound appears to be increasing with respect to the interval-size. Moreover, the interval-type FDBT outperforms FGBT and FIBT on both the in-band approximation performance and the error bound for the cases that $\varpi_l<1.5$.

   \begin{figure}[ht!]
    \centering
      \includegraphics[scale=0.55]{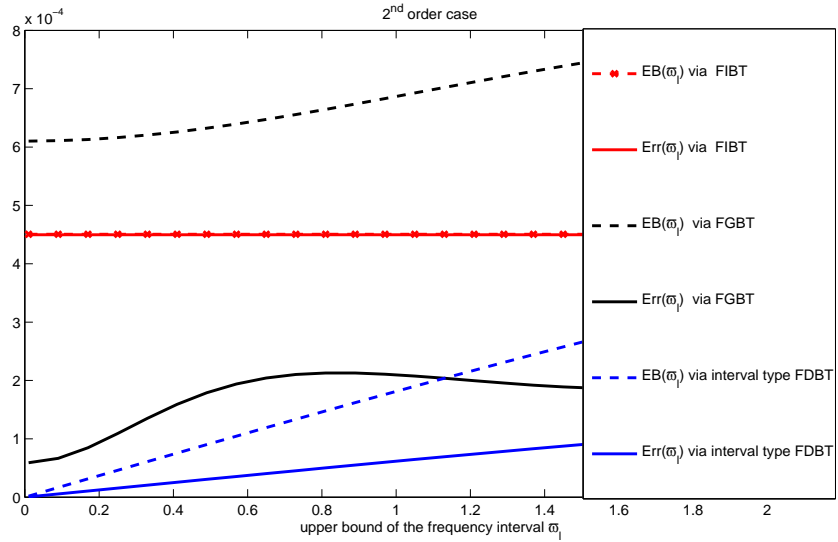}
      \vspace{-0.0cm}
     \caption{Curves of maximum error and error bound ($1^{st}$ order reduced model). $Err(\varpi_l)$: represents the maximum approximation error over frequency interval $[-\varpi_l,\varpi_l]$, i.e. $Err(\varpi_l)=\sigma_{max}(G(\jmath\omega)-G_r(\jmath\omega)), \forall \omega \in [-\varpi_l,\varpi_1]$, where $G_r(\jmath\omega)$ denotes the reduced model  generated by specified method. $Err(\varpi_l)$: represents the interval-type error bound for interval-type FDBT. }
       \end{figure}

   \begin{figure}[ht!]
    \centering
      \includegraphics[scale=0.55]{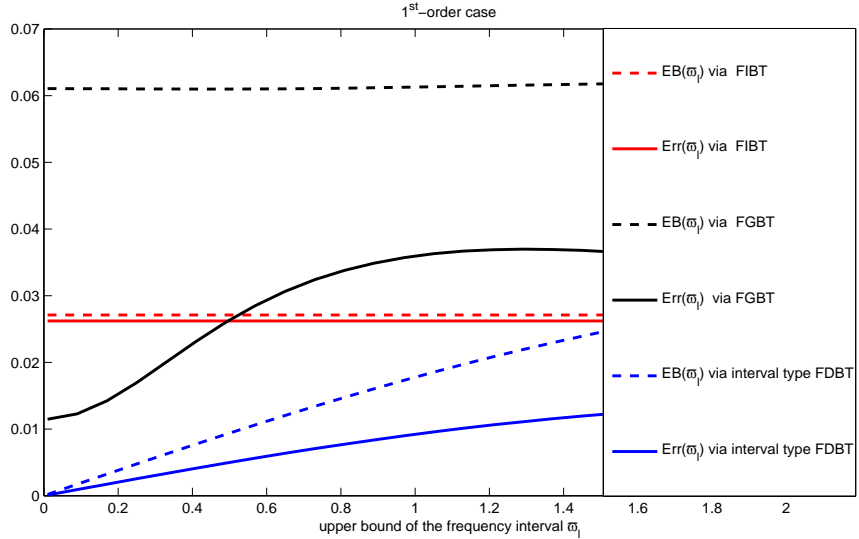}
      \vspace{-0.0cm}
          \caption{Curves of maximum error and error bound ($2^{nd}$ order reduced model). $Err(\varpi_l)$: represents the maximum approximation error over frequency interval $[-\varpi_l,\varpi_l]$, i.e. $Err(\varpi_l)=\sigma_{max}(G(\jmath\omega)-G_r(\jmath\omega)), \forall \omega \in [-\varpi_l,\varpi_1]$, where $G_r(\jmath\omega)$ denotes the reduced model  generated by specified method. $Err(\varpi_l)$: represents the interval-type error bound for interval-type FDBT. }
       \end{figure}

\noindent As referred to in Remark 4, the interval-type FDBT always provides small error bound as long as the size of frequency interval is small enough. To show this, a randomization experiment was carried out. We randomly generate 100 stable systems with order 4. (The off-diagonal elements of matrix $A$ and each element of the matrices $B, C, D$ are obtained with a zero mean and unitary variance normal distribution, the diagonal element of matrix $A$ are obtained with mean -5.5 and variance 4.5). To compare the average performance between FGBT and interval-type FDBT, several indices are defined in Table \uppercase\expandafter{\romannumeral2}.

\begin{table}[!hbp]
\centering
\caption{Indices used to compare the approximation error and error bound generated by different methods}
\vspace{-0.1cm}
\begin{tabular}{l|l}
\Xhline{0.8pt}
   Indexes                     &  computation formula                             \\ [2mm] \Xhline{0.8pt}
   Err(${\varpi _l},r$, FDBT)   & $ \frac{1}{L}\sum\limits_{l = 1}^L {\frac{{{\sigma _{\max }}(G^l(\jmath\omega ) - G_{Dr}^{l}(\jmath\omega )), {\kern 6pt} \omega  \in [ - {\varpi _l}, + {\varpi _l}] }}{{{\sigma _{\max }}(G^l(\jmath \omega ) - {G_{Ir}}^{l}(\jmath\omega )) {\kern 6pt}\omega  \in [ - {\varpi _l}, + {\varpi _l}] }}}$
                                                                    \\ [2mm] \Xhline{0.2pt}
   Err(${\varpi _l},r$, FGBT)   &   $\frac{1}{L}\sum\limits_{l = 1}^L {\frac{{{\sigma _{\max }}(G^l(\jmath\omega ) - G_{Gr}^{l}(\jmath\omega )), {\kern 6pt} \omega  \in [ - {\varpi _l}, + {\varpi _l}]}}{{{\sigma _{\max }}(G^l(\jmath\omega ) - G_{Ir}^{l}(\jmath\omega )), {\kern 6pt} \omega  \in [ - {\varpi _l}, + {\varpi _l}]}} } $
                                    \\ [2mm] \Xhline{0.2pt}
   Eb(${\varpi _l},r$, FDBT)   & $ \frac{1}{L}\sum\limits_{l = 1}^L {\frac{{ upper {\kern 4pt} bound {\kern 4pt} of  {\kern 4pt} \left({\sigma _{\max }}(G^l(\jmath\omega ) - G_{Dr}^{l}(\jmath\omega )), {\kern 6pt} \omega  \in [ - {\varpi _l}, + {\varpi _l}] \right) }}{{upper {\kern 4pt} bound {\kern 4pt} of  {\kern 4pt} \left( {\sigma _{\max }}(G^l(\jmath\omega ) - {G_{Ir}}^{l}(\jmath\omega )) {\kern 6pt}\omega  \in [ - \infty, +\infty]\right) }}}$
                                                                     \\ [2mm] \Xhline{0.2pt}
   Eb(${\varpi _l},r$, FGBT)   &   $\frac{1}{L}\sum\limits_{l = 1}^L {\frac{{ upper {\kern 4pt} bound {\kern 4pt} of  {\kern 4pt} \left( {\sigma _{\max }}(G^l(\jmath\omega ) - G_{Gr}^{l}(\jmath\omega )), {\kern 6pt} \omega  \in [ -\infty, +\infty] \right)}}{{ upper {\kern 4pt} bound {\kern 4pt} of  {\kern 4pt} \left( {\sigma _{\max }}(G^l(\jmath\omega ) - G_{Ir}^{l}(\jmath\omega )), {\kern 6pt} \omega  \in [ -\infty, +\infty] \right)}} } $
                                    \\ [2mm] \Xhline{0.2pt}
  \end{tabular}
\end{table}

\noindent In Table \uppercase\expandafter{\romannumeral2}, $\varpi_l$ represents the upper bound of the symmetrical frequency interval, $r$ is the order of reduced model,  $G_{Dr}^l(\jmath\omega), G_{Sr}^l(\jmath\omega),G_{Gr}^l(\jmath\omega),G_{Ir}^l(\jmath\omega)$ represent the reduced models of order $r$ generated by interval-type FDBT, SPA, FGBT and the classic FIBT for the $l^{th}$ random model, respectively. Fig. 8 and Fig. 9 display the experiment results on the these indices.

    \begin{figure}[ht!]
    \centering
      \includegraphics[scale=0.52]{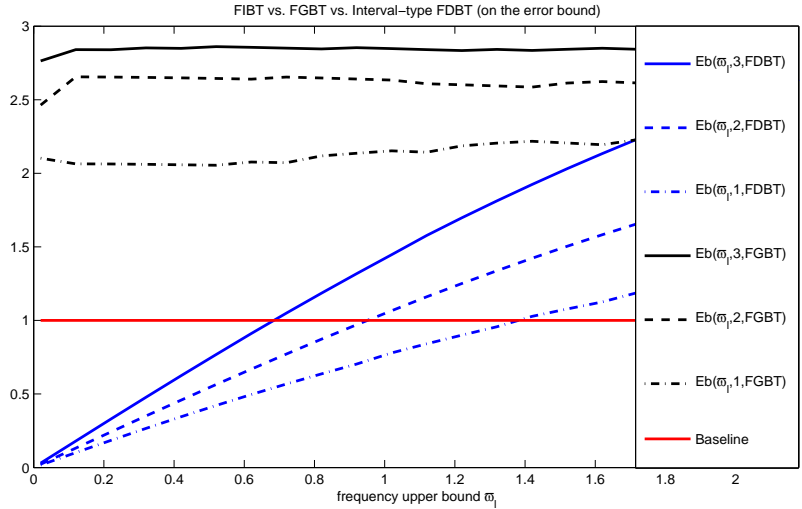}
      \caption{Randomized experiment results on actual error}
       \end{figure}

\noindent Fig. 8 validated that the interval-type error bound provided by interval-type FDBT generally is smaller than the EF-type error bound generated by FIBT and FDBT for the cases that the interval-size is small enough (about $\varpi<1$ in this experiment). Although the advantage on the error bound is restricted for small interval-size cases, it is suggested to take the interval-type FDBT as a feasible option even for medium interval-size cases. According to our experiment, the interval-type FDBT generally also gives rise to better in-band approximation performance than FIBT and FGBT for medium interval-size cases (see Fig. 9 for details). \\

  \begin{figure}[ht!]
    \centering
      \includegraphics[scale=0.52]{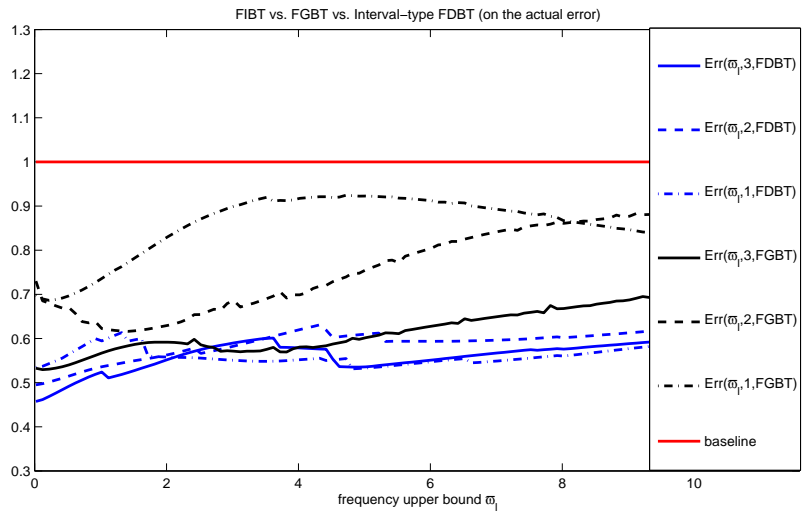}
        \caption{Randomized experiment results on error bound}
       \end{figure}
   \end{example}

\begin{example} Lets consider the $201^{th}$ order RLC ladder circuit example provided by \cite{BT2_Survey} \cite{Ladder_example}. As has been pointed out in \cite{Ladder_example}, approximating the ladder circuit is quite difficult in the framework of balancing related model order reduction approaches since neither the Hankel nor the singular values decay to any extent. In particular, its dynamic behavior over low frequency ranges is too complex to be well approximated due to the special distribution of its poles and zeros. Here we are interested to approximate this circuit in the following cases:\\
\noindent \textbf{Case I}:\; the frequency of input signal belongs to a unknown neighborhood of dominating operating frequency point ($\varpi=0$). \\
\noindent \textbf{Case II}:  the frequency of input signal is known to be within the interval ($\omega \in [-0.5, +0.5]$).\\
\noindent At first, lets consider the case I and apply FIBT and generalized SPA to build reduced models. The frequency response of full model and reduced model of order $181$ are shown in Fig. 10.

   \begin{figure}[ht!]
    \centering
      \includegraphics[scale=0.5]{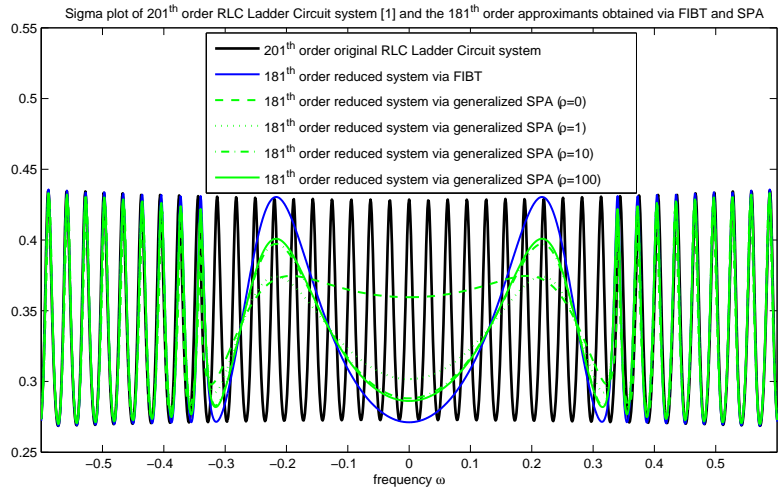}
      \vspace{-0.5cm}
      \caption{Approximating the ladder circuit in Case I via FIBT \& Generalized SPA}
       \end{figure}\noindent As indicated by the visual inspections of the frequency response of the reduced vs. the full system from Fig. 10, the standard FIBT is failed to approximate the dynamic behaviors around $\omega=0$ even the order of reduced model is $181$. Besides, it is surprising and remarkable that the generalized SPA method also failed here. Although the generalized SPA approach generally leads to good approximation performance around $\omega=0$, it is incapable to cope with this example. Now, lets resort to the proposed SF-type FDBT for dealing with the model reduction problem in case I. Our experiment results show that good approximants can be generated via SF-type FDBT as long as the order of reduced system is larger than $50$. The frequency response of the full system and reduced systems in Fig. 11 show a success of SF-type FDBT for this example. Therefore, the SF-type FDBT should be treated as a useful alternative way for solving model reduction problem with low-frequency assumption. In our opinion, it is a non-trivial parallel approach beside the well-known generalized SPA. In addition, the frequency response of reduced model generated by \emph{Pad$\acute{e}$ approximation} (i.e moment-matching at zero) is also included in Fig. 11. It is observed that \emph{Pad$\acute{e}$ approximation} also leads to good  approximation performance, which is both natural and expected since it is an inherent local approximation method. It is interesting that the performance of interval-type FDBT is very similar with \emph{Pad$\acute{e}$ approximation} for this example. The reasons for the similarity is unclear and comparing them is far beyond the scope of this paper. Here we just want to show the possibility that good local approximation performance of the ladder circuit may also be obtained in the balancing-related framework.
   \begin{figure}[ht!]
    \centering
      \includegraphics[scale=0.5]{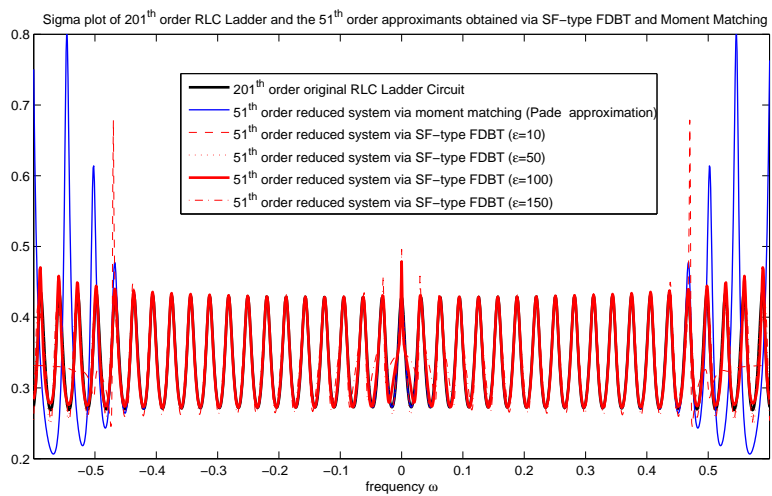}
         \caption{Approximating the ladder circuit in Case I via SF-type FDBT \& Moment matching}
       \end{figure}\\\noindent Finally, lets consider the stated model reduction problem in case II and apply the interval-type FDBT and FGBT \cite{BT2_Survey} to build reduced model. Fig. 12 shows the frequency response of full model and reduced models of order $61$ and $51$.  The results show that only the interval-type FDBT leads to satisfactory in-band approximation performance.

   \begin{figure}[ht!]
    \centering
      \includegraphics[scale=0.5]{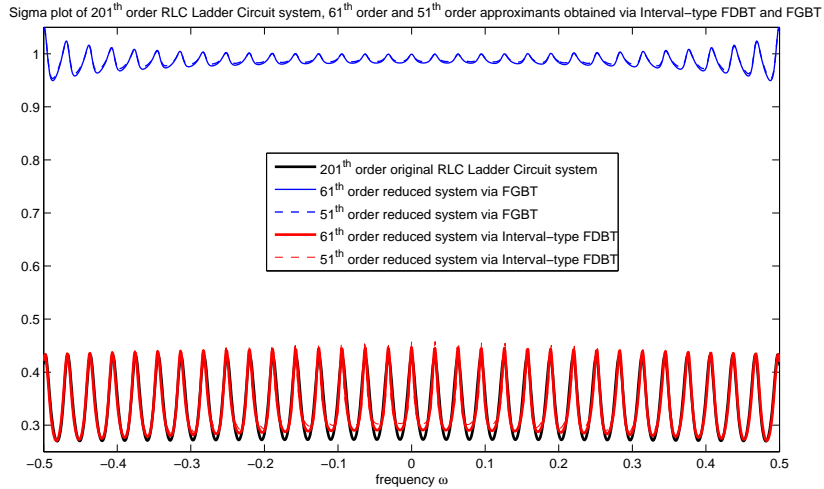}
            \caption{Approximating the ladder circuit in Case II via Interval-type FDBT \& FGBT}
       \end{figure}
\end{example}

\section{Conclusions and Future Work}
This paper revisited model order reduction over limited frequency intervals in the framework of balanced truncation. From a new perspective that establishing  frequency-dependent type error bound instead of the existing frequency-independent type error bound, we developed SF-type and interval-type frequency-dependent  balanced truncation methods to cope with the partially pre-known frequency interval cases and the completely pre-known frequency interval cases, respectively. Moreover, SF-type and interval-type error bound have been established in the first time. Examples have been illustrated to verify the efficiency and advantage of the proposed methods. Future work will focus on developing frequency-dependent balanced truncation algorithms in other forms to get a sharper frequency-dependent error bound.

%








\end{document}